\documentclass[reprint,prl,superscriptaddress]{revtex4-1}
\usepackage[dvips]{graphicx}
\usepackage{amsmath,amssymb,amsthm,mathrsfs,amsfonts}
\usepackage{textcomp}
\usepackage{physics}
\usepackage{bm}
\usepackage{enumerate}
\usepackage{soul,xcolor}
\usepackage{verbatim}
\usepackage{color}
\usepackage{hyperref}

\newtheorem{theorem}{Theorem}
\newtheorem{lemma}{Lemma}
\newtheorem{corollary}{Corollary}

\newtheorem{remark}{Remark}
\newtheorem{result}{Result}

\newcommand{\renyi}{R$\mathrm{\acute{e}}$nyi }
\newcommand{\mc}{\mathcal}

\newcommand{\N}{\mathcal{N}}

\begin{document}

\setstcolor{red}   % sets the line color to red

% \title{Operational Resource Theory of Quantum Channels: Distillation and Dilution}

\title{Operational Resource Theory of Quantum Channels}

\begin{abstract}
% Quantum resource theories have been successfully developed to characterize the non-classicality of quantum states. 
% To manipulate certain quantum resources, quantum channels are used as the tool, for instance, to generate a resourceful state from a resource free state. 
% Such a manipulation power can be naturally regarded as a generalized quantum resource, leading to an open research direction of quantum channels resource theories.  
% In this Letter, we address this problem by first proposing a general resource framework for quantum channels with resource monotones. We study its interplay with state resource theories by relating  channel resource monotones to its manipulation power of the state resource.
% %showing that the resource quantifier of a channel bounds the state resource it can generate. 
% We introduce channel distillation and dilution in two different ways and show that our proposed channel resource monotones can be used to characterize the distillation and dilution rates. 
% %, whose resource quantifiers lower and upper bounds a general asymptotic resource measure, respectively. 
% Finally, we apply our framework to the resource theory of coherence as an example and consider the operational tasks of asymptotic channel distillation and dilution. We find that the resource theory of channel coherence is asymptotically irreversible. 

Quantum resource theories have been widely studied to systematically characterize the non-classicality of quantum systems. 
Most resource theories focus on quantum states and study their interconversions.
Although quantum channels are generally used as the tool for state manipulation,  such a manipulation capability can be naturally regarded as a generalized quantum resource, leading to an open research direction in the resource theories of quantum channels.
Various resource-theoretic properties of channels have been investigated, however, without treating channels themselves as operational resources that can also be manipulated and converted. 
In this Letter, we address this problem by first proposing a general resource framework for quantum channels and introducing resource monotones based on general distance quantifiers of channels. 
We study the interplay between channel and state resource theories by relating resource monotones of a quantum channel to its manipulation power of the state resource.
Regarding channels as operational resources, we introduce asymptotic channel distillation and dilution, the most important tasks in an operational resource theory, and show how to bound the conversion rates with channel resource monotones.
Finally, we apply our results to quantum coherence as an example and introduce the coherence of channels, which characterizes the coherence generation ability of channels. We consider asymptotic channel distillation and dilution with maximally incoherent operations and find the theory asymptotically irreversible, in contrast to the asymptotic reversibility of the coherence of states. 

% We show how the channel resource monotones characterize asymptotic channel distillation and dilution, the most important two tasks in an operational resource theory.
% As the most important two tasks in an operational resource theory, channel distillation and dilution are defined in two different ways and shown to be characterized by channel resource monotones.

\end{abstract}

\date{\today}

\author{Yunchao Liu}
\affiliation{Department of Electrical Engineering and Computer Sciences, University of California, Berkeley, California 94720, USA}
\affiliation{Center for Quantum Information, Institute for Interdisciplinary Information Sciences, Tsinghua University, Beijing 100084, China}

\author{Xiao Yuan}
\email{xiao.yuan.ph@gmail.com}
\affiliation{Department of Materials, University of Oxford, Parks Road, Oxford OX1 3PH, United Kingdom}

\maketitle

% \emph{Introduction.---}
% Understanding various non-classical properties of quantum mechanics, also known as quantum resources, has been an important research area in quantum information theory. 
Quantum resource theories have been developed as systematic frameworks for the characterization, quantification, and operational interpretation for various quantum effects, including coherence~\cite{aberg2006quantifying,Baumgratz14,RevModPhys.89.041003}, discord~\cite{henderson2001classical,PhysRevLett.88.017901,Modi12}, entanglement~\cite{PhysRevA.53.2046,PhysRevLett.78.2275,RevModPhys.81.865},  thermodynamics~\cite{PhysRevLett.111.250404,goold2016role}, magic in stabilizer computation~\cite{veitch2012negative,veitch2014resource,PhysRevLett.118.090501}, etc. The advances in quantum resource theories not only lead to a deeper understanding of the underlying physics, but also provide new insights and mathematical tools for various quantum information processing tasks that exploit the resources, such as quantum key distribution~\cite{devetak2005distillation,coles2016numerical}, quantum random number generation~\cite{YuanPhysRevA2015,yuan2016interplay,PhysRevA.97.012302}, and quantum computing~\cite{Datta09,veitch2014resource,PhysRevX.6.021043, howard2014contextuality,PhysRevA.93.012111,anand2016coherence}. We refer to Ref.~\cite{chitambar2018quantum} for a recent review.
% of quantum resource theories.

A quantum resource theory usually starts by defining three important components: free states, free operations and resource measures. Free states are those quantum states that do not possess any resource. Free operations are quantum operations that cannot generate resource from free states, and their precise definitions are guided by physical motivations. Resource measures are functionals that map quantum states to real numbers, which cannot be increased under free operations. 
In an operational resource theory, one of the most important tasks is to study state conversion under free operations.
Resource distillation and dilution are  optimal schemes that convert between a given state and the maximal resource state, which in general can be uniquely determined in a given resource theory.
% quantum states that have maximal amount of resource.
In many resource theories, such as coherence and entanglement, the distillation and dilution tasks are generally characterized by resource measures based on the relative entropy and $\alpha$-\renyi divergences 
in the asymptotic i.i.d.~\cite{rains2001semidefinite,hayden2001asymptotic,winter2016operational} and the general one-shot scenario~\cite{Buscemi2010distill,Brandao2011oneshot,Buscemi2011entanglement,zhao2018oneshot,regula2018one,zhao2018oneshotdistill,liu2019one}, respectively. 

% For example, the distillation and dilution in the resource theories of entanglement and coherence have been well-studied, where the results correspond to resource measures defined by the relative entropy in the asymptotic i.i.d. scenario~\cite{rains2001semidefinite,hayden2001asymptotic,winter2016operational} and the $\alpha$-\renyi divergence in the more general one-shot scenario~\cite{Buscemi2010distill,Brandao2011oneshot,Buscemi2011entanglement,zhao2018oneshot,regula2018one,zhao2018oneshotdistill}. 
% This is in accord with the knowledge that general entropies often collapse to the Shannon entropy in the asymptotic limit, known as the asymptotic equipartition property~\cite{Tomamichel2009fully,Brandao2010generalization}.

Existing developments in quantum resource theories are mainly centered around quantum states, while quantum channels are used as the tool for state resource manipulation. 
In principle, we can also regard a quantum channel as the resource object and study the resource theory of channels.
This has been done mainly for characterizing a certain property of quantum channels, such as channel simulation~\cite{Berta2013entanglement,pirandola2017fundamental,wilde2018entanglement,Dana2018resource,Diaz2018usingreusing}, spatial correlations~\cite{Rivas2015quantifying}, resource generation~\cite{li2018quantifying,takagi2019general}, magic quantification~\cite{seddon2019quantifying, wang2019quantifying}, entanglement of quantum channels~\cite{BHLS03,Kaur_2017}, channel discrimination~\cite{PhysRevLett.118.100502,berta2018amortized,pirandola2019fundamental}, quantum memory~\cite{Rosset2018resource,2018arXiv180903403S,yuan2019robustness}, non-Gaussianity~\cite{zhuang2018resource}, and others~\cite{quantifying17sr,2018arXiv180502431C,2018arXiv181110307K,wolfe2019bell}. Some general conditions that resource theories of channels should satisfy were also formulated~\cite{coecke2016mathematical,theurer2018quantifying}. Nevertheless, a high-level view of a quantum channel itself as an operational resource has not been well established, partially because the entropic quantifiers for quantum channels has only been developed recently~\cite{Cooney2016,Leditzky2018approaches,yuan2019hypothesis,gour2018entropy}.
It remains an open direction to study the  operational resource theory of quantum channels~\cite{chitambar2018quantum}.

% However, as a quantum state can be seen as a special quantum device that generates a fixed output, quantum information processing tasks often depend on general quantum devices that generate different outputs depending on the input. In principle, a general resource framework for quantum devices, which are mathematically characterized as quantum channels or linear completely positive trace preserving (CPTP) maps, can be built, but existing literature fall short in this direction. Resource-theoretic research on quantum channels has mainly been focused on characterizing a certain property of quantum channels in the context of a quantum resource theory, such as channel simulation~\cite{Dana2018resource,Diaz2018usingreusing,wilde2018entanglement}, resource generation~\cite{li2018quantifying}, magic quantification~\cite{seddon2019quantifying, wang2019quantifying}, amortized entanglement of a quantum channel~\cite{Kaur_2017}, and quantum memory~\cite{Rosset2018resource,2018arXiv180903403S}. While a high-level view of quantum channel itself as a resource has not been well established. This is partly because the entropic quantifiers for quantum channels has only been developed recently~\cite{yuan2019hypothesis,gour2018entropy}.

In this Letter, we introduce a general framework for the resource theory of quantum channels. We make use of distance quantifiers of channels, which are defined based on distance quantifiers of states, to construct resource monotones of channels. Because the resource theories of states and channels are highly related, we show how the channel resource monotones bound its ability of manipulating state resources. Then we introduce channel distillation and dilution, two  fundamental tasks in an operational channel resource theory. By focusing on the resource generation capability of channels, we show how the asymptotic distillation and dilution rates are bounded by the introduced resource monotones.
% , to construct an operational resource theory, we introduce channel distillation and dilution and show how the asymptotic rates are bounded by resource monotones. 
Finally, we take coherence as an example to show the applicability of our general results. We propose the relative entropy of channel coherence and show how it measures the ability of coherence generation. We study channel distillation and dilution with maximally incoherent operations and obtain tight bounds for the asymptotic rates, concluding that the theory is irreversible.

\emph{A general resource framework of channels.---}We first review the resource theory of states. Denote the Hilbert space by $\mc{H}$, quantum states by $\mc{D}(\mc{H})$, and quantum channels by $\mc{L}(\mc{H})$. A \emph{state resource theory} in a Hilbert space $\mc{H}$ is defined as a tuple $(\Omega, \Phi,\mu)$. Here $\Omega\subseteq\mc{D}(\mc{H})$ is the set of free states; $\Phi\subseteq\mc{L}(\mc{H})$ is the set of free or resource-nongenerating (RNG) operations that preserve free states, i.e., $\phi(\omega)\in\Omega, \forall\phi\in\Phi, \forall\omega\in\Omega$; and $\mu:\mc{D}(\mc{H})\to\mathbb{R}$ are resource measures that satisfy:\\
(S1) Nonnegativity: $\mu(\rho)\geq 0$ and $\mu(\omega)=0,\forall\omega\in\Omega$;\\
(S2) Monotonicity: $\mu(\phi(\rho))\leq \mu(\rho),\forall \rho\in\mc{D}(\mc{H}), \forall \phi\in\Phi$.\\
% for all $\rho\in\mc{D}(\mc{H})$ and $\phi\in\Phi$.\\
When these two properties are satisfied, we called it a resource monotone. Additional requirements of resource measures can be added for a specific resource theory. 
One popular type of resource measure is based on distance quantifiers $D(\rho,\sigma)$, which satisfy nonnegativity ($D(\rho,\sigma)\geq 0$ and $D(\rho,\sigma)= 0$ if $\rho=\sigma$) and data-processing inequality ($D(\mc{N}(\rho),\mc{N}(\sigma))\leq D(\rho,\sigma),\forall\mc{N}\in \mc L(\mc H)$). 
Resource quantifiers are defined as the minimal distance to the set of free states, $\mu(\rho)=\min_{\omega\in\Omega}D(\rho,\omega)$, which satisfies (S1) and (S2).

%For a distance quantifier $D(\rho,\sigma)$, we can define a resource quantifier as the minimal distance to the set of free states, $\mu(\rho)=\min_{\omega\in\Omega}D(\rho,\omega)$. When $D$ satisfies nonnegativity ($D(\rho,\sigma)\geq 0$ and $D(\rho,\sigma)= 0$ if $\rho=\sigma$) and data-processing inequality ($D(\mc{N}(\rho),\mc{N}(\sigma))\leq D(\rho,\sigma)$ for any quantum channel $\mc{N}$), $\mu(\rho)$ is a proper resource monotone that satisfies (S1) and (S2). 
% An alternative construction is through the use of resource destroying maps (RDM)~\cite{Liu2017resource}, which are defined as quantum channels that map a free state to itself, and map every state to a free state~\footnote{Note that in the original definition, resource destroying maps are not required to be quantum channels, but here we adopt the restricted definition}. For any $\lambda\in\mathrm{RDM}$, one can show that $\mu_\lambda(\rho)=D(\rho,\lambda(\rho))$ is a proper resource measure. As resource destroying maps may not exist in a particular resource theory, we further generalize the definition to resource destroying channels (RD), which are quantum channels that map any state to a free state. By definition, we have $\mathrm{RDM} \subseteq \mathrm{RD}$. For a $\lambda\in\mathrm{RD}$, $\mu_\lambda$ may not be a resource monotone as it can violate both requirements. We consider it to be a useful quantifier due to its closed-from expression and $\mu(\rho)=\min_{\lambda\in\mathrm{RD}}\mu_\lambda(\rho)$.

Following a similar mathematical structure, a \emph{channel resource theory} is also defined as a tuple $(\mc{F},\mc{O},\mc{R})$. Free channels $\mc{F}$ are those quantum channels that do not have any resource. 
Free super-operations $\mc{O}$ are a subset of super-channels~\cite{Chiribella2008transforming} that transforms free channels to free channels. 
Here super-channels transform a quantum channel $\mc{N}^{A\to B}$ to another channel $\mc{M}^{C\to D}$ by
%\begin{equation}\label{superchannel}
  $\mc{M}^{C\to D}=\mc{W}^{BE\to D}\circ \left(\mc{N}^{A\to B}\otimes \mathrm{id}_E\right)\circ \mc{V}^{C\to AE}$,
%\end{equation}
where superscripts denote input/output systems, $\mathrm{id}$ is the identity map, and $\mc{W},\mc{V}$ are also quantum channels. %Depending on the physical constraints, we can also consider ancilla-free super-channels, where the system $E$ is one-dimensional. 
Channel resource measures $\mc{R}:\mc{L}(\mc{H})\to\mathbb{R}$ map a quantum channel to a real number satisfying\\
(R1) Nonnegativity: $\mc{R}(\mc{N})\geq 0$ and $\mc{R}(\mc{M})=0,\forall\mc{M}\in\mc{F}$;\\ 
(R2) Monotonicity: $\mc{R}(\Lambda(\mc{N}))\leq \mc{R}(\mc{N}),\forall\mc{N}\in\mc{L}(\mc{H})$ and $\forall\Lambda\in\mc{O}$.\\
We can construct channel resource monotones with distance quantifiers of channels as
\begin{equation}\label{Eq:channelmonotone1}
	{\mc{R}}(\mc{N})=\min_{\mc{M}\in\mc{F}}{D}(\mc{N},\mc{M}).
\end{equation}
Here, a distance quantifier of two channels ${D}(\mc{N},\mc{M})$ is generally defined by maximizing a distance quantifier of states over all input quantum states~\cite{Leditzky2018approaches},
${D}(\mc{N},\mc{M})=\max_{\rho_{AE}\in\mc{D}(\mc{H}_{AE})}D(\mc{N}_A\otimes \mathrm{id}_E(\rho_{AE}),\mc{M}_A\otimes \mathrm{id}_E(\rho_{AE}))$, where $A$ denotes the system of interest and $E$ is any ancillary system. Examples using this construction include the diamond norm~\cite{Aharonov1998quantum,Watrous2009semi} and entropies of quantum channels~\cite{Cooney2016,yuan2019hypothesis,gour2018entropy}. %The requirements for ${\mc{R}}(\mc{N})$ being a monotone is summarized as follows. 

% \emph{\textbf{Result} 1.---}For any distance quantifier of states $D(\rho,\sigma)$ satisfying nonnegativity and data-processing inequality, ${\mc{R}}(\mc{N})$ is a channel resource monotone.

Our proposed framework is a natural mathematical extension of state resource theories, which can be used to characterize general properties of quantum channels. A channel resource theory can be independent of any state resource, such as the quality of a quantum memory~\cite{Rosset2018resource,2018arXiv180903403S,yuan2019robustness}. However, in common scenarios quantum channels are used to manipulate quantum states, and thus it is natural to further extend our framework where the channel resource theory interacts with a state resource theory. As some channels may generate more resource from input states than others, we mainly focus on the resource generation ability of channels. One can also consider general manipulation abilities of channels, such as resource detection~\cite{theurer2018quantifying}, which we leave as future works.

\emph{Interplay with state resource theories.---}Consider a state resource theory $\mathbf{S}=(\Omega, \Phi,\mu)$ with a tensor product structure, i.e., $\phi\otimes \mathrm{id}\in \Phi,\forall \phi\in\Phi$. 
%Suppose $\mu$ is constructed by a distance quantifier $D(\rho,\sigma)$ as $\mu(\rho)=\min_{\omega\in\Omega}D(\rho,\omega)$.
% and $\mu_\lambda(\rho)=D(\rho,\lambda(\rho))$ for all resource destroying channels $\lambda$. 
To characterize the state resource generating power, we construct a corresponding channel resource theory $\mathbf{C}=(\mc{F},\mc{O},\mc{R})$. 
As RNG channels cannot increase resource, we define free channels $\mc{F}$ to be RNG channels of $\mathbf{S}$.
We define free super-operations $\mc{O}$ as those quantum super-channels that only use free operations in $\Phi$, i.e., %\begin{equation}\label{superoperationfromchannel}
  $\Lambda(\mc{N})=\phi_1\circ\left(\mc{N}\otimes \mathrm{id}\right)\circ\phi_2$
%\end{equation}
for $\phi_1,\phi_2\in\Phi$. The motivation for this configuration is that since free operations can be applied to manipulate states, they should also be allowed in channel manipulation. 
Resource monotones can be defined as Eq.~\eqref{Eq:channelmonotone1} with $\mc{F}=\mathrm{RNG}$.
Alternatively, we introduce a
generalized distance quantifier $D_{\Omega}(\mc{N},\mc{M})=\max_{\omega_{AE}\in\Omega_{AE}}D(\mc{N}_A\otimes \mathrm{id}_E(\omega_{AE}),\mc{M}_A\otimes \mathrm{id}_E(\omega_{AE}))$ with a maximization only over free states. 
Although $D_{\Omega}(\mc{N},\mc{M})$ may be increased under a general super-channel, it defines legitimate channel resource monotones
\begin{equation}
    \mc{R}_{\Omega}(\mc{N})=\min_{\mc{M}\in\mathrm{RNG}}D_{\Omega}(\mc{N},\mc{M}).
\end{equation}
As $D_{\Omega}(\mc{N},\mc{M})\le D(\mc{N},\mc{M})$, we have ${\mc{R}}_{\Omega}(\mc{N})\le {\mc{R}}(\mc{N})$ for the same distance quantifier $D$.
One can further consider an optimization over a general set of states that satisfies certain properties, an interesting open direction of defining general channel resource quantifiers.

Both ${\mc{R}}_{\Omega}(\mc{N})$ and ${\mc{R}}(\mc{N})$ can be regarded as minimizing distances to free channels. % The resource monotones ${\mc{R}}(\mc{N})$ and ${\mc{R}}_{\Omega}(\mc{N})$ are defined based on different distance quantifiers of quantum channels. 
Alternatively, we can also define channel resource monotones by considering the manipulation ability of channels with respect to the resource of quantum states.
%Channel monotones can be also defined by considering its operational meaning with respect to the state resource theory. 
Based on the intuition that a more powerful channel can generate or boost more resource from input states, we define two channel monotones as
\begin{equation}\label{Eq:gbdefinition}
	\begin{aligned}
		\mc{R}_g(\mc{N})=&\max_{\omega_{AE}\in\Omega_{AE}}\mu(\mc{N}_A\otimes \mathrm{id}_E(\omega_{AE})),\\
		\mc{R}_b(\mc{N})=&\max_{\rho_{AE}\in\mathcal{D}(\mathcal{H}_{AE})}\left(\mu(\mathcal{N}_A\otimes \mathrm{id}_E(\rho_{AE}))-\mu(\rho_{AE})\right),
	\end{aligned}
\end{equation}
where $\mu(\rho)=\min_{\omega\in\Omega}D(\rho,\omega)$.
The \emph{resource generating power} $\mc{R}_g(\mc{N})$ characterizes the maximal output resource that can be generated from free input states, while the \emph{resource boosting power} $\mc{R}_b(\mc{N})$ characterizes the maximal boosted resource between the output and input states. Similar concepts have been studied in several specific resource theories, including coherence~\cite{Mani2015cohering,Zanardi2017measures,Zanardi2017coherence,BU20171670,Dana2018resource,Diaz2018usingreusing}, thermodynamics~\cite{Navascues2015nonthermal}, non-Gaussianity~\cite{zhuang2018resource}, and others~\cite{chitambar2018quantum}. For a general resource theory, the resource generating/boosting power has been initially proposed by~\textcite{li2018quantifying} with an optimization only over states of system $A$. %They study properties of these two measures, such as monotonicity. Considering distance measures that satisfy the triangle inequality and the data processing inequality, they prove the equivalence of these two measures and show how they quantify channel distinguishability.
Different from Ref.~\cite{li2018quantifying}, our definitions consider an optimization with ancillae and more general distance measures. {We summarize the four types of channel resource monotones constructed above as follows and leave the proof in Supplementary Materials.
% We show that $\mc{R}_g(\mc{N})$ and $\mc{R}_b(\mc{N})$ can both be resource monotones. 

% \begin{theorem}
%   For a state resource monotone $\mu$ satisfying (S1-2), the resource capacities $\mc{R}_g$ and $\mc{R}_b$ are channel resource monotones satisfying (R1-2).
% \end{theorem}

% \emph{\textbf{Result} 3.---}For a state resource monotone $\mu$ satisfying (S1-2), the resource capacities $\mc{R}_g$ and $\mc{R}_b$ are channel resource monotones satisfying (R1-2).

%Our proposed method of constructing a channel resource theory from an existing state resource theory is based on the intuition that channels that can generate resource are more valuable than those who cannot. That is we would like to characterize the \emph{resource generating power} of quantum channels. Mathematically, the resource generating power of quantum channel $\mc{N}$ is defined as $\max_{\omega\in\Omega}\mu(\mc{N}(\omega))$ for a state resource measure $\mu$, which corresponds to the maximum output resource quantity from input free states. We show that the resource generating power is closely related to our proposed channel resource measure.

\begin{theorem}
	For any distance quantifier $D(\rho,\sigma)$ satisfying nonnegativity and data-processing inequality, ${\mc{R}}(\mc{N}),\mc{R}_{\Omega}(\mc{N}),\mc{R}_g(\mc{N}),\mc{R}_b(\mc{N})$ are  channel resource monotones satisfying (R1-2).
\end{theorem}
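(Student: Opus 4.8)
The plan is to establish two preliminary facts and then verify (R1) and (R2) for each of the four monotones. The first fact is that $\mu(\rho)=\min_{\omega\in\Omega}D(\rho,\omega)$ is itself a valid state monotone satisfying (S1)--(S2): nonnegativity is immediate from that of $D$, and monotonicity follows because for $\phi\in\Phi$ and the optimal free $\omega^\star$ we have $\mu(\phi(\rho))\le D(\phi(\rho),\phi(\omega^\star))\le D(\rho,\omega^\star)=\mu(\rho)$, using $\phi(\omega^\star)\in\Omega$ and data-processing. The second fact, which does the real work, is that the channel distance contracts under any super-channel $\Lambda(\mc{N})=\mc{W}\circ(\mc{N}\otimes\mathrm{id})\circ\mc{V}$, i.e.\ $D(\Lambda(\mc{N}),\Lambda(\mc{M}))\le D(\mc{N},\mc{M})$. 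I would prove this by fixing any input $\rho_{CE}$, setting $\sigma=(\mc{V}\otimes\mathrm{id}_E)(\rho_{CE})$, and noting $\Lambda(\mc{N})_C\otimes\mathrm{id}_E(\rho_{CE})=(\mc{W}\otimes\mathrm{id}_E)[(\mc{N}\otimes\mathrm{id})(\sigma)]$; data-processing through $\mc{W}\otimes\mathrm{id}_E$ removes $\mc{W}$, and since $\sigma$ is one admissible input with ancilla, the resulting distance is bounded by the maximization defining $D(\mc{N},\mc{M})$. Given these, nonnegativity (R1) is routine: $\mc{R},\mc{R}_\Omega,\mc{R}_g,\mc{R}_b\ge 0$ because $D,\mu\ge 0$, while for a free channel $\mc{M}\in\mathrm{RNG}$ one gets $0$ either by taking $\mc{M}$ itself in the outer minimization, or, for $\mc{R}_g,\mc{R}_b$, because $\mc{M}\otimes\mathrm{id}\in\Phi$ makes $\mu(\mc{M}\otimes\mathrm{id}(\omega))=0$ on free inputs and $\mu(\mc{M}\otimes\mathrm{id}(\rho))\le\mu(\rho)$ on general inputs.

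For monotonicity (R2) of $\mc{R}$ I would combine the two facts: given a free super-channel $\Lambda$ and the optimal free $\mc{M}^\star$ for $\mc{N}$, the image $\Lambda(\mc{M}^\star)$ is again free, so $\mc{R}(\Lambda(\mc{N}))\le D(\Lambda(\mc{N}),\Lambda(\mc{M}^\star))\le D(\mc{N},\mc{M}^\star)=\mc{R}(\mc{N})$. The identical template applies to $\mc{R}_\Omega$ once I check that the restricted distance $D_\Omega$ also contracts under the interplay free super-operations $\Lambda=\phi_1\circ(\cdot\otimes\mathrm{id})\circ\phi_2$; the only extra point is that the pre-image $\sigma=(\phi_2\otimes\mathrm{id}_E)(\omega_{CE})$ of a free input stays free because $\phi_2\otimes\mathrm{id}\in\Phi$ by the tensor-product structure, so $\sigma$ remains admissible for the free-state maximization defining $D_\Omega$; I also use that $\Lambda$ preserves free channels, which holds since a composition of RNG operations is RNG.

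For $\mc{R}_g$ and $\mc{R}_b$ I would argue directly from the monotonicity of $\mu$. Writing $\Lambda(\mc{N})=\phi_1\circ(\mc{N}\otimes\mathrm{id})\circ\phi_2$ and $\sigma=(\phi_2\otimes\mathrm{id}_E)(\rho_{CE})$, post-processing by the free map $\phi_1\otimes\mathrm{id}$ gives $\mu(\Lambda(\mc{N})\otimes\mathrm{id}(\rho_{CE}))\le\mu(\mc{N}\otimes\mathrm{id}(\sigma))$. For $\mc{R}_g$ the input $\rho_{CE}=\omega_{CE}$ is free, hence $\sigma$ is free and the bound is $\le\mc{R}_g(\mc{N})$. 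For $\mc{R}_b$ the subtracted term is handled by the pre-processing: $\mu(\sigma)=\mu(\phi_2\otimes\mathrm{id}(\rho_{CE}))\le\mu(\rho_{CE})$, so $\mu(\Lambda(\mc{N})\otimes\mathrm{id}(\rho_{CE}))-\mu(\rho_{CE})\le\mu(\mc{N}\otimes\mathrm{id}(\sigma))-\mu(\sigma)\le\mc{R}_b(\mc{N})$, since $\sigma$ is an admissible general input.

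The main obstacle, and the conceptual point separating the four cases, is the interplay between the input optimization set and the pre-processing channel $\mc{V}$ (or $\phi_2$). For $\mc{R}$, whose defining distance maximizes over all inputs, $\mc{V}$ may be arbitrary and $\mc{R}$ is monotone under the full set of free super-channels; but for $\mc{R}_\Omega$ and $\mc{R}_g$, whose optimizations are restricted to free states, monotonicity would fail unless pre-processing keeps free inputs free, which is precisely why the framework requires $\phi_2\in\Phi$ together with the tensor-product assumption $\phi\otimes\mathrm{id}\in\Phi$. For $\mc{R}_b$ the delicate point is instead the sign bookkeeping of the subtracted reference $\mu(\rho_{CE})$: I would make sure the pre-processing $\phi_2$ moves it in the favourable direction, since it can only decrease $\mu$, so that the boosted-resource difference for $\Lambda(\mc{N})$ is dominated by that for $\mc{N}$ evaluated on $\sigma$.
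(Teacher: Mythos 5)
Your proposal is correct and follows essentially the same route as the paper's own proof: data-processing of the (restricted) channel distance under super-channels, preservation of free channels and free states under free super-operations, and monotonicity of $\mu$ under the pre- and post-processing maps for $\mc{R}_g,\mc{R}_b$ (with the same sign bookkeeping of the subtracted term for $\mc{R}_b$). The only difference is organizational—you extract the contraction $D(\Lambda(\mc{N}),\Lambda(\mc{M}))\le D(\mc{N},\mc{M})$ as a standalone lemma, whereas the paper runs the identical inequalities inline as a min-max chain.
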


Interestingly, although ${\mc{R}}(\mc{N})$ and  ${\mc{R}}_{\Omega}(\mc{N})$ are defined with channel distance quantifiers, they are closely related to the state resource manipulation power $\mc{R}_g(\mc{N})$ and $\mc{R}_b(\mc{N})$.

\begin{theorem}
For a state resource theory $\mathbf{S}$ and its corresponding channel resource theory $\mathbf{C}$ as constructed above, we have $\mc{R}_g(\mc{N})\leq \mathcal{R}_{\Omega}(\mathcal{N})$. Furthermore, assuming that $D$ satisfies the triangle inequality, we also have $\mc{R}_b(\mc{N})\le {\mc{R}}(\mc{N})$.
\end{theorem}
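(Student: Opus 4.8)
The plan is to prove each inequality by fixing an arbitrary free (RNG) channel $\mc{M}$, bounding the relevant state monotone by a channel distance to $\mc{M}$, and finally minimizing over $\mc{M}$. The single structural fact I will use throughout is that, by the tensor-product assumption $\phi\otimes\mathrm{id}\in\Phi$, every RNG channel $\mc{M}$ satisfies $\mc{M}_A\otimes\mathrm{id}_E\in\Phi$, so its output on any free state is again free: $\mc{M}_A\otimes\mathrm{id}_E(\omega_{AE})\in\Omega$ whenever $\omega_{AE}\in\Omega_{AE}$.

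For the first inequality $\mc{R}_g(\mc{N})\le\mc{R}_\Omega(\mc{N})$, I fix $\mc{M}\in\mathrm{RNG}$ and any free input $\omega_{AE}\in\Omega_{AE}$. Since $\mc{M}_A\otimes\mathrm{id}_E(\omega_{AE})$ is a particular free state, the definition $\mu(\rho)=\min_{\sigma\in\Omega}D(\rho,\sigma)$ as a minimum over free states immediately gives $\mu(\mc{N}_A\otimes\mathrm{id}_E(\omega_{AE}))\le D(\mc{N}_A\otimes\mathrm{id}_E(\omega_{AE}),\mc{M}_A\otimes\mathrm{id}_E(\omega_{AE}))$. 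Taking the maximum over $\omega_{AE}\in\Omega_{AE}$ turns the left side into $\mc{R}_g(\mc{N})$ and the right side into $D_\Omega(\mc{N},\mc{M})$, and since $\mc{M}$ was arbitrary I minimize over RNG channels to obtain $\mc{R}_g(\mc{N})\le\min_{\mc{M}\in\mathrm{RNG}}D_\Omega(\mc{N},\mc{M})=\mc{R}_\Omega(\mc{N})$.

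For the second inequality $\mc{R}_b(\mc{N})\le\mc{R}(\mc{N})$, I again fix $\mc{M}\in\mathrm{RNG}$ and now take an arbitrary (not necessarily free) input $\rho_{AE}$, together with a closest free state $\omega^*\in\Omega_{AE}$ achieving $\mu(\rho_{AE})=D(\rho_{AE},\omega^*)$. Because $\mc{M}_A\otimes\mathrm{id}_E(\omega^*)$ is free, the minimum in $\mu$ bounds $\mu(\mc{N}_A\otimes\mathrm{id}_E(\rho_{AE}))\le D(\mc{N}_A\otimes\mathrm{id}_E(\rho_{AE}),\mc{M}_A\otimes\mathrm{id}_E(\omega^*))$. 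Applying the triangle inequality through the intermediate point $\mc{M}_A\otimes\mathrm{id}_E(\rho_{AE})$ splits this into $D(\mc{N}_A\otimes\mathrm{id}_E(\rho_{AE}),\mc{M}_A\otimes\mathrm{id}_E(\rho_{AE}))+D(\mc{M}_A\otimes\mathrm{id}_E(\rho_{AE}),\mc{M}_A\otimes\mathrm{id}_E(\omega^*))$, and data processing under the channel $\mc{M}_A\otimes\mathrm{id}_E$ bounds the second term by $D(\rho_{AE},\omega^*)=\mu(\rho_{AE})$. Rearranging yields the pointwise estimate $\mu(\mc{N}_A\otimes\mathrm{id}_E(\rho_{AE}))-\mu(\rho_{AE})\le D(\mc{N}_A\otimes\mathrm{id}_E(\rho_{AE}),\mc{M}_A\otimes\mathrm{id}_E(\rho_{AE}))$; maximizing over $\rho_{AE}$ gives $\mc{R}_b(\mc{N})\le D(\mc{N},\mc{M})$, and minimizing over $\mc{M}$ completes the proof.

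The min/max manipulations are routine; the one genuinely load-bearing step is the combination of the triangle inequality with data processing in the second part, which is exactly where the extra hypothesis on $D$ enters. Without the triangle inequality there is no way to relate the mismatched arguments $\mc{M}_A\otimes\mathrm{id}_E(\omega^*)$ and $\mc{N}_A\otimes\mathrm{id}_E(\rho_{AE})$ to the channel distance and to $\mu(\rho_{AE})$ simultaneously. I should also check that the closest free state $\omega^*$ exists (compactness of $\Omega$ ensures the minimum is attained); if $\Omega$ fails to be compact one replaces $\omega^*$ by a near-optimal free state and takes a limit, which leaves the final inequalities unchanged.
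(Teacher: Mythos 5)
Your proposal is correct and takes essentially the same route as the paper: the first inequality is proved exactly as in the paper (the image of a free state under an RNG channel with ancilla is again free, so the minimum defining $\mu$ lower-bounds the pointwise distance to $\mc{M}$), and the second is the same triangle-inequality-plus-data-processing argument, differing only cosmetically in the choice of free anchor point ($\mc{M}_A\otimes\mathrm{id}_E(\omega^*)$ with $\omega^*$ the closest free state to the input $\rho_{AE}$, versus the paper's closest free state $\delta^*$ to the output $\mc{M}_A\otimes\mathrm{id}_E(\rho_{AE})$) and in that fixing $\mc{M}$ pointwise lets you skip the paper's max--min exchange. One small correction to your justification: the fact you actually need (and the paper implicitly assumes the same) is that RNG channels remain resource non-generating when tensored with an identity ancilla, not that $\mc{M}_A\otimes\mathrm{id}_E\in\Phi$ --- RNG channels need not belong to $\Phi$, so the tensor-product assumption on $\Phi$ does not by itself deliver that claim.
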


% \emph{\textbf{Result} 4.---}For a state resource theory $\mathbf{S}$ and its corresponding channel resource theory $\mathbf{C}$ as constructed above, we have 
% %  \begin{equation}
%  	$\mc{R}_g(\mc{N})\leq \mathcal{R}_{\Omega}(\mathcal{N})$.
% %  	\leq\max_{\omega_{AE}\in\Omega_{AE}}\mu_\lambda(\mc{N}_A\otimes \mathrm{id}_E(\omega_{AE})),
% %  \end{equation}
% % where $\lambda$ is an arbitrary resource destroying channel of $\mathbf{S}$.

\noindent The resource generating power $\mc{R}_g$ is upper bounded by the generalized distance-based resource monotone $\mathcal{R}_{\Omega}$. 
For specific resource theories, such as coherence, the equal sign can also be achieved as shown later in this paper.
% In particular, for state resource theories with a resource destroying channel $\lambda$ that satisfies $\mu(\rho) = D(\rho,\lambda(\rho))$, the equal signs are achieved with $\mc{R}_g(\mc{N})= \mathcal{R}_{\Omega}(\mathcal{N})$. 
Meanwhile, the resource boosting power $\mc{R}_b(\mc{N})$ is also upper bounded by the distance-based monotone ${\mc{R}}(\mc{N})$ under certain assumptions. 
% The relationship between ${\mc{R}}(\mc{N})$ and $\mc{R}_b(\mc{N})$ is unclear when the distance quantifier $D$ violates the triangle inequality. 
Our results show fundamental connections between the resource theories of quantum channels and states, where the resource manipulation power is upper bounded by distance-based resource monotones, even without specifying the particular resource being studied. Next, we consider operational tasks of channel distillation and dilution  and show their characterizations with the proposed resource monotones.

% \begin{theorem}\label{Thm:boosting}
% For a state resource theory $\mathbf{S}=(\Omega, \Phi,\mu)$ and its corresponding channel resource theory $\mathbf{C}=(\mathrm{RNG},\mc{O},\mc{R})$, when the distance quantifier of states $D$ in the definition of $\mu$ satisfies the triangle inequality, we have $${\mc{R}}(\mc{N})\geq \mc{R}_b(\mc{N}).$$
% \end{theorem}

% \emph{\textbf{Result} 5.---}For a state resource theory $\mathbf{S}$ and its corresponding channel resource theory $\mathbf{C}$ as constructed above, when the distance quantifier of states $D$ satisfies the triangle inequality, we have $\mc{R}_b(\mc{N})\le {\mc{R}}(\mc{N}) $.

%The proposed method for constructing channel resource theories from state resource theories is not unique, and other methods based on different physical intuitions can be proposed. In analogous to Theorem~\ref{Thm:rng}, we also obtain similar results that are related to the \emph{resource preserving power} $\max_{\rho\in\mathcal{D}(\mathcal{H})}\mu(\mathcal{N}(\rho))$ and the \emph{resource boosting power} $\max_{\rho\in\mathcal{D}(\mathcal{H})}\left(\mu(\mathcal{N}(\rho))-\mu(\rho)\right)$, which are presented in Supplementary Material.

\emph{Channel distillation and dilution.---}
%Our results introduced above indicate fundamental connections between the resource theory of quantum channels and states. Besides resource quantification, many state resource theories contain well-developed results for characterizing operational resource manipulation, including resource distillation and dilution. In analogy, it is natural to develop the operational resource theory of quantum channels, which regards quantum channels as the basic resource independent of the state resource theory. In the following, we propose quantum channel distillation and dilution as the standard operational task for channel resource theories, and study resource measures that are related to the task.
The key of an operational resource theory is to study resource interconversion, where standard operational tasks include resource distillation and dilution.
For a channel resource theory $\mathbf{C}=(\mc{F},\mc{O},\mc{R})$, we assume that there exists a set of \emph{optimal unit resource channels} $\mathfrak{G}$ which are defined from physical considerations. Then a natural question is how can we convert non-optimal channels to optimal ones and vice versa. For a channel resource theory constructed from a state resource theory, a natural definition of the optimal unit resource channels are those that have maximal resource generating power, i.e. those quantum channels that can generate a maximal unit resource state from certain free states. For example, the Hadamard gate has the maximal resource generating power in the resource theory of coherence. There also exists channel resource theories that do not rely on state resources, including quantum memory~\cite{Rosset2018resource,2018arXiv180903403S,yuan2019robustness} and channel purity, in which unitary channels are regarded as optimal resources.
With the definition of optimal unit resource channels, we define asymptotic quantum channel distillation/dilution based on two ways of using channels.

%\begin{definition}[standard]
\noindent\textbf{Definition 1.} 
  \emph{The parallel quantum channel distillation/dilution rate is defined as}
  \begin{equation}\label{defdistill}
\begin{aligned}
  \mc{R}_{\text{distill}}(\mc{N})=&\lim_{\varepsilon\to0^+}\lim_{n\to\infty}\max\big\{R:\exists \mc{G}\in\mathfrak{G},\Lambda\in\mc{O},\\
  &\|\Lambda\left(\mathcal{N}^{\otimes n}\right)-\mc{G}^{\otimes nR}\|_{\diamond}\leq \varepsilon\big\}\\
  \mc{R}_{\text{dilute}}(\mc{N})=&\lim_{\varepsilon\to0^+}\lim_{n\to\infty}\min\big\{R:\exists \mc{G}\in\mathfrak{G},\Lambda\in\mc{O},\\
  &\|\Lambda\left(\mc{G}^{\otimes nR}\right)-\mathcal{N}^{\otimes n}\|_{\diamond}\leq \varepsilon\big\},
\end{aligned}
\end{equation}
%\end{definition}

%\begin{definition}[general]
\noindent\textbf{Definition 2.} 
  \emph{The iterative quantum channel distillation/dilution rate is defined as}
  \begin{equation}\label{defgeneraldistill}
\begin{aligned}
  \tilde{\mc{R}}_{\text{distill}}(\mc{N})=&\lim_{\varepsilon\to0^+}\lim_{n\to\infty}\max\big\{R:\exists \mc{G}\in\mathfrak{G},\Lambda_i\in\mc{O},\\
  &\|\Lambda_1(\mc{N})\circ\cdots\circ\Lambda_n(\mc{N})-\mc{G}^{\otimes nR}\|_{\diamond}\leq \varepsilon\big\}\\
  \tilde{\mc{R}}_{\text{dilute}}(\mc{N})=&\lim_{\varepsilon\to0^+}\lim_{n\to\infty}\min\big\{R:\exists \mc{G}\in\mathfrak{G},\Lambda_i\in\mc{O},\\
  &\|\Lambda_1(\mc{G})\circ\cdots\circ\Lambda_{nR}(\mc{G})-\mathcal{N}^{\otimes n}\|_{\diamond}\leq \varepsilon\big\}.
\end{aligned}
\end{equation}
%\end{definition}

\noindent Here $\|\N_A\|_{\diamond}=\max_{\rho_{AE}\in\mc D(\mc H_{AE})}\Tr|\N_A\otimes \mathrm{id}_E(\rho_{AE})|$ denotes the diamond norm of channels.  Note that the iterative protocols are stronger than the parallel ones as the channels are used independently so that outputs of channels can further be inputs of other channels. Therefore we have $\mc{R}_{\text{distill}}(\mc{N})\leq \tilde{\mc{R}}_{\text{distill}}(\mc{N})\le\tilde{\mc{R}}_{\text{dilute}}(\mc{N})\le \mc{R}_{\text{dilute}}(\mc{N})$. However, the distilled or diluted channels after the iterative protocols can only be used in parallel, as their input spaces are tensor producted. If we want to obtain independent channels that can be used iteratively, we can distill the channels one by one. Such a scenario corresponds to one-shot channel distillation and dilution, which will be considered in future works.

Next, we show how resource monotones are related to the asymptotic distillation/dilution rate. For this purpose, we introduce \emph{asymptotic resource measures}, a universal set of channel resource measures which are useful for characterizing asymptotic operational tasks. They are defined as resource monotones that satisfy further requirements besides (R1-2), including: \\
(R3) Normalization: $\mc{R}(\mc{G})=1$ for all $\mc{G}\in\mathfrak{G}$; \\
(R4) Additivity: $\mc{R}(\mc{N}\otimes\mc{M})=\mc{R}(\mc{N})+\mc{R}(\mc{M})$;\\ 
(R$4'$) Subadditivity: $\mc{R}(\mc{N}\otimes\mc{M})\leq\mc{R}(\mc{N})+\mc{R}(\mc{M})$;\\
(R5) Continuity: $|\mc{R}(\mc{N})-\mc{R}(\mc{M})|\leq O(f(\varepsilon)\log d)$ with $\lim_{\varepsilon\to 0^+}f(\varepsilon)=0$, when $\|\mc{N}-\mc{M}\|_{\diamond}\leq\varepsilon$.\\
Here $d$ denotes the dimension of the Hilbert space of the outputs. We note that (R$4'$) is a weaker version of (R4).
The asymptotic resource measures play an important role in parallel channel distillation/dilution.

% \begin{theorem}\label{Thm:distilldilution}
%   For any asymptotic resource measure $\mc{R}_{\mathrm{asymp}}$ for a channel resource theory and for any quantum channel $\mc{N}$, we have 
%   \begin{equation}
%   	\begin{aligned}
%   		\mc{R}_{\mathrm{distill}}(\mc{N})&\leq \mc{R}_{\mathrm{asymp}}(\mc{N})\leq \mc{R}_{\mathrm{dilute}}(\mc{N}).
%   	\end{aligned}
%   \end{equation}
% %   Here $\mc{R}_b(\mc{N})$ is defined in Eq.~\eqref{Eq:gbdefinition} and should also satisfy (R3) and (R5). 
% \end{theorem}

\begin{theorem}
For any asymptotic resource measure $\mc{R}_{\mathrm{asymp}}(\mc{N})$ satisfying (R1-5), we have 
  \begin{equation}
  	\begin{aligned}
  		\mc{R}_{\mathrm{distill}}(\mc{N})&\leq \mc{R}_{\mathrm{asymp}}(\mc{N})\leq \mc{R}_{\mathrm{dilute}}(\mc{N}).
  	\end{aligned}
  \end{equation}
\end{theorem}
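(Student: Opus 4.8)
The plan is to run the standard ``sandwich'' argument in which a regularized, normalized, additive and continuous monotone bounds asymptotic conversion rates from both sides. Both inequalities follow by feeding an approximate conversion into $\mc{R}_{\mathrm{asymp}}$ and chaining the defining properties (R2)--(R5); the only genuine subtlety is bookkeeping of the dimension factor in the continuity estimate (R5).

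For the distillation bound I would start from a near-optimal protocol achieving rate $R$, i.e.\ some $\Lambda\in\mc{O}$ and $\mc{G}\in\mathfrak{G}$ with $\|\Lambda(\N^{\otimes n})-\mc{G}^{\otimes nR}\|_{\diamond}\le\varepsilon$. Normalization (R3) together with additivity (R4) give $\mc{R}_{\mathrm{asymp}}(\mc{G}^{\otimes nR})=nR$, and continuity (R5) then yields $nR\le \mc{R}_{\mathrm{asymp}}(\Lambda(\N^{\otimes n}))+O(f(\varepsilon)\log d)$. Monotonicity (R2) followed by subadditivity (R$4'$) bounds the first term by $\mc{R}_{\mathrm{asymp}}(\N^{\otimes n})\le n\,\mc{R}_{\mathrm{asymp}}(\N)$. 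Dividing by $n$ leaves $R\le \mc{R}_{\mathrm{asymp}}(\N)+O(f(\varepsilon)\log d)/n$; since the output dimension of $n$ tensored copies scales as $\log d=n\log d_0$ for a fixed single-copy output dimension $d_0$, the correction is $O(f(\varepsilon)\log d_0)$, independent of $n$. Taking $n\to\infty$ and then $\varepsilon\to 0^+$, so that $f(\varepsilon)\to 0$, gives $\mc{R}_{\mathrm{distill}}(\N)\le \mc{R}_{\mathrm{asymp}}(\N)$.

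The dilution bound is symmetric. From a near-optimal dilution protocol $\|\Lambda(\mc{G}^{\otimes nR})-\N^{\otimes n}\|_{\diamond}\le\varepsilon$, monotonicity and normalization give $\mc{R}_{\mathrm{asymp}}(\Lambda(\mc{G}^{\otimes nR}))\le nR$, while continuity transfers this to $\mc{R}_{\mathrm{asymp}}(\N^{\otimes n})\le nR+O(f(\varepsilon)\log d)$. Here I would invoke additivity (R4) in the form $\mc{R}_{\mathrm{asymp}}(\N^{\otimes n})=n\,\mc{R}_{\mathrm{asymp}}(\N)$; note that it is the \emph{super}additive direction that is used on this side, which is why full additivity rather than mere subadditivity is required for the lower bound. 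After dividing by $n$ and taking the same iterated limits one obtains $\mc{R}_{\mathrm{asymp}}(\N)\le \mc{R}_{\mathrm{dilute}}(\N)$.

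The main obstacle, and the reason (R5) is stated with an explicit $\log d$, is controlling the continuity correction under tensor powers: a bound scaling super-linearly in $n$ would survive the normalization by $n$ and ruin the argument. The key observation that makes everything work is that the relevant dimension is that of a single output copy, so $\log d$ grows only linearly in $n$ and the per-copy error is the $n$-independent quantity $O(f(\varepsilon)\log d_0)$, which is killed by the final $\varepsilon\to 0^+$ limit. Everything else is a direct substitution of (R2)--(R4), with the distillation side using only subadditivity (R$4'$) and the dilution side using additivity (R4).
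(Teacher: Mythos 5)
Your proof is correct and follows essentially the same route as the paper's: feed the approximate protocol into $\mc{R}_{\mathrm{asymp}}$ and chain normalization (R3), additivity (R4), monotonicity (R2), and continuity (R5), with the $\log d = n\log d_0$ bookkeeping ensuring the error term vanishes after dividing by $n$ and taking $\varepsilon\to 0^+$. Your additional observation that the distillation bound needs only subadditivity (R$4'$) while the dilution bound genuinely requires the superadditive direction of (R4) is a valid sharpening of the hypotheses, but it does not change the structure of the argument.
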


% \emph{\textbf{Result} 6.---}For any asymptotic resource measure $\mc{R}_{\mathrm{asymp}}$ for a channel resource theory and for any quantum channel $\mc{N}$, we have 
%   \begin{equation}
%   	\begin{aligned}
%   		\mc{R}_{\mathrm{distill}}(\mc{N})&\leq \mc{R}_{\mathrm{asymp}}(\mc{N})\leq \mc{R}_{\mathrm{dilute}}(\mc{N}).
%   	\end{aligned}
%   \end{equation}

% ,\\
%   		\tilde{\mc{R}}_{\mathrm{distill}}(\mc{N})&\leq \mc{R}_b(\mc{N})\leq \tilde{\mc{R}}_{\mathrm{dilute}}(\mc{N}).

%For generalized channel distillation/dilution, we 
%\begin{theorem}\label{Thm:generaldisdilldilution}
%  Let $\mu$ be a resource measure for the state resource theory, and let $$\mc{R}_b(\mc{N})=\max_{\rho\in\mathcal{D}(\mathcal{H}_{AE})}\left(\mu(\mathcal{N}\otimes \mathrm{id}(\rho))-\mu(\rho)\right)$$. If $\mc{R}_b$ is an asymptotic resource measure, we have $\tilde{\mc{R}}_{\mathrm{distill}}(\mc{N})\leq \mc{R}_b(\mc{N})\leq \tilde{\mc{R}}_{\mathrm{dilute}}(\mc{N})$.
%\end{theorem}

It is in general hard to explicitly determine the distillation/dilution rate as well as constructing asymptotic resource measures without specifying the resource structure.
% and determining the set $\mathfrak{G}$.
However, following the spirit of studying the resource generating power of quantum channels, we can focus on the special case where the optimal unit resource channel is uniquely defined to be the channel $\mc{G}_\star(\rho)=\rho_m$, which constantly outputs the maximal unit resource state $\rho_m$ {and is the strongest channel for generating state resources.
This definition aligns with the intuition that channels which generate more powerful resource states are more resourceful. In this case, we are able to characterise the distillation and dilution rates with the resource generating/boosting power of channels in the general setting, which can be applied to all channel resource theories with a corresponding state resource. %respectively measure how much state resource the channel can generate/boost or how much state resource is required to simulate the channel.
% It also 
% In particular, for certain resource theories such as coherence and thermodynamics [XXX check], the channel $\mc{G}_\star(\rho)$ can be recognized as the standard optimal resource channel that is capable to prepare an arbitrary quantum channel with the same output dimension under free operations, in analog to the maximal coherent state and the Bell state.   
}
Assuming that the state resource monotone $\mu$ satisfies similar requirements as (R1-5), we show that the distillation rates can be explicitly determined. 

% On the one hand, it is equivalent to consider general quantum channel simulation with the maximal resource state. On the other hand, there may exists 
% the channel $\mc{G}_\star(\rho)$ can be the unique optimal unit resource channel which  

%under two assumptions of the optimal unit resource channels $\mathfrak{G}$: (a)     
% All channels in $\mathfrak{G}$ are equivalent: $\forall\mc{G}_1,\mc{G}_2\in\mathfrak{G}$, $\exists\phi_1,\phi_2\in\Phi$ such that $\mc{G}_2=\phi_1\circ\left(\mc{G}_1\otimes \mathrm{id}\right)\circ\phi_2$.
% (b) The channel $\mc{G}_\star(\rho)=\rho_m$ that constantly outputs the maximal unit resource state $\rho_m$ belongs to $\mathfrak{G}$.
% For parallel channel distillation, we consider the resource generating power $\mc{R}_g(\mc{N})=\max_{\omega\in\Omega_{AE}}\mu_d(\mc{N}\otimes \mathrm{id}(\omega))$ with $\mu_d$ being the asymptotic distillation rate of states. 

% \begin{theorem}\label{Thm:distillexact}
% The standard channel distillation rate is
% $$\mc{R}_g(\mc{N})\leq\mc{R}_{\mathrm{distill}}(\mc{N})\leq \lim_{n\to\infty}\frac{1}{n}\mc{R}_g\left(\mc{N}^{\otimes n}\right),$$ when $\mc{R}_g$ satisfy (R3) and (R5). The equal sign is achieved if $\mc{R}_g$ also satisfy (R4).
% \end{theorem} 

\begin{theorem}
When $\mathfrak{G}=\{\mc G_\star\}$, the parallel and iterative channel distillation rates satisfy
\begin{equation}
\begin{aligned}
    \mc{R}_g(\mc{N})\leq&\mc{R}_{\mathrm{distill}}(\mc{N})\leq \lim_{n\to\infty}\frac{1}{n}\mc{R}_g\left(\mc{N}^{\otimes n}\right),\\
    &\tilde{\mc{R}}_{\mathrm{distill}}(\mc{N})\le\mc{R}_b(\mc{N}),
\end{aligned}
\end{equation}
where we assume that $\mc{R}_g$ satisfies (R3), (R5) and $\mc{R}_b$ satisfies (R3). The equal sign $\mc{R}_g(\mc{N})=\mc{R}_{\mathrm{distill}}(\mc{N})$ is achieved if $\mc{R}_g$ also satisfies (R4'), and $\tilde{\mc{R}}_{\mathrm{distill}}(\mc{N})=\mc{R}_b(\mc{N})$ is achieved when the state resource theory is asymptotically reversible. 
\end{theorem}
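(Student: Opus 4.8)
The plan is to prove the upper (converse) bounds by combining the monotonicity and asymptotic continuity of $\mc{R}_g$ and $\mc{R}_b$, and the lower (achievability) bounds by exhibiting explicit free super-operations, with the equality conditions arising from squeezing the two together. First I would pin down the resource content of the target: since $\mc{G}_\star^{\otimes k}$ outputs $\rho_m^{\otimes k}$ independently of its input, for any free $\omega$ we have $\mc{G}_\star^{\otimes k}\otimes\mathrm{id}(\omega)=\rho_m^{\otimes k}\otimes\omega_E$ with $\omega_E$ free, so the assumed normalization and additivity of $\mu$ give $\mc{R}_g(\mc{G}_\star^{\otimes k})=\mu(\rho_m^{\otimes k})=k$, and a short subadditivity-plus-monotonicity argument likewise gives $\mc{R}_b(\mc{G}_\star^{\otimes k})=k$. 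These identities convert the diamond-norm closeness $\|\Lambda(\mc{N}^{\otimes n})-\mc{G}_\star^{\otimes nR}\|_\diamond\le\varepsilon$ into a statement about resource content at rate $R$.

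For the parallel converse I would apply $\mc{R}_g$ to both channels in the distillation constraint: monotonicity under the free super-operation $\Lambda\in\mc{O}$ gives $\mc{R}_g(\Lambda(\mc{N}^{\otimes n}))\le\mc{R}_g(\mc{N}^{\otimes n})$, while asymptotic continuity (R5), with output dimension scaling as $d_m^{nR}$, controls $|\mc{R}_g(\Lambda(\mc{N}^{\otimes n}))-\mc{R}_g(\mc{G}_\star^{\otimes nR})|$. Using $\mc{R}_g(\mc{G}_\star^{\otimes nR})=nR$, dividing by $n$, and taking $n\to\infty$ followed by $\varepsilon\to0^+$ yields $\mc{R}_{\mathrm{distill}}(\mc{N})\le\lim_n\frac1n\mc{R}_g(\mc{N}^{\otimes n})$. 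The iterative converse is instead a resource-tracking argument along the composition $\Lambda_1(\mc{N})\circ\cdots\circ\Lambda_n(\mc{N})$: the free maps inside each $\Lambda_i$ cannot raise $\mu$, while each insertion of $\mc{N}$ raises $\mu$ by at most $\mc{R}_b(\mc{N})$ by definition of the boosting power (the ancilla in its definition absorbing the remaining registers), so the total output resource is at most $n\mc{R}_b(\mc{N})$; matching this against the target resource $nR$ via the $\mc{G}_\star$ normalization and continuity gives $\tilde{\mc{R}}_{\mathrm{distill}}(\mc{N})\le\mc{R}_b(\mc{N})$.

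For the parallel achievability I would take $\Lambda=\phi_1\circ(\mc{N}^{\otimes n}\otimes\mathrm{id})\circ\phi_2$ where $\phi_2$ discards the input and prepares the optimal free state $\omega^\star$ attaining $\mc{R}_g(\mc{N})=\mu(\sigma)$ with $\sigma=\mc{N}\otimes\mathrm{id}(\omega^\star)$, so that $\mc{N}^{\otimes n}$ produces $\sigma^{\otimes n}$; since the resulting channel ignores its input, the diamond-norm target reduces to the trace-norm task of converting $\sigma^{\otimes n}$ into $\rho_m^{\otimes nR}$ by the free operation $\phi_1$, i.e.\ ordinary asymptotic state distillation, which under the assumed (R1-5)-type properties of $\mu$ proceeds at rate $\mu(\sigma)=\mc{R}_g(\mc{N})$. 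Combined with the converse this gives the displayed sandwich, and if $\mc{R}_g$ additionally obeys subadditivity (R4') then $\frac1n\mc{R}_g(\mc{N}^{\otimes n})\le\mc{R}_g(\mc{N})$ collapses the two bounds to $\mc{R}_{\mathrm{distill}}(\mc{N})=\mc{R}_g(\mc{N})$. For the iterative equality under state reversibility I would set up a recycling protocol: reversibility lets free operations interconvert resource states at rate $\mu$, so one repeatedly prepares the $\mc{R}_b$-optimal input $\rho^\star$, applies a single $\mc{N}$ to gain $\mc{R}_b(\mc{N})=\mu(\mc{N}\otimes\mathrm{id}(\rho^\star))-\mu(\rho^\star)$, and recovers the invested $\mu(\rho^\star)$ by reversible reconversion, netting $\mc{R}_b(\mc{N})$ of $\rho_m$-resource per channel use and hence $\tilde{\mc{R}}_{\mathrm{distill}}(\mc{N})=\mc{R}_b(\mc{N})$.

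The main obstacle is the achievability, particularly the iterative case: the recycling argument must be made rigorous as an asymptotic protocol in which the invested resource is genuinely recovered without accumulating error, and this is exactly where asymptotic reversibility of the state theory is indispensable (without it only the converse $\tilde{\mc{R}}_{\mathrm{distill}}\le\mc{R}_b$ survives). A secondary technical point is the careful bookkeeping of the asymptotic-continuity corrections together with the prescribed order of limits $\lim_{\varepsilon\to0^+}\lim_{n\to\infty}$, ensuring the $O(f(\varepsilon)\log d)$ terms, whose $\log d$ grows linearly in $n$, are divided down by $n$ before $\varepsilon\to0^+$ is taken.
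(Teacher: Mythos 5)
Your proposal follows essentially the same route as the paper's own proof: the parallel converse via monotonicity (R2) plus continuity (R5) against the normalization $\mc{R}_g(\mc{G}_\star^{\otimes nR})=nR$; the parallel achievability by having $\phi_2$ prepare the optimal free state, running $\mc{N}^{\otimes n}$ in parallel, and letting $\phi_1$ perform asymptotic state distillation; the iterative converse by telescoping $\mu$ along the sequence $\rho_{k+1}=\phi_{k+1}\circ(\mc{N}\otimes\mathrm{id}_k)(\rho_k)$, each channel use contributing at most $\mc{R}_b(\mc{N})$; and the iterative achievability by a recycling protocol (prepare, apply $\mc{N}$, distill, re-dilute) whose per-use net gain is $\mc{R}_b(\mc{N})$ under asymptotic reversibility, exactly the paper's Steps 1--7. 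The one imprecision worth noting: you assert that state distillation from $\sigma^{\otimes n}$ proceeds at rate $\mu(\sigma)$ ``under the assumed (R1-5)-type properties of $\mu$,'' but such properties only guarantee $\mu$ upper-bounds the distillable resource (e.g., the relative entropy of entanglement satisfies them yet exceeds distillable entanglement); the paper closes this by explicitly setting $\mu=\mu_d$, the asymptotic distillation rate of the state theory, in the definition of $\mc{R}_g$ for this theorem (and likewise uses $\mu_d$ throughout the recycling protocol), which is the reading your argument implicitly requires.
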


% \emph{\textbf{Result} 7.---}When $\mc{R}_g$ satisfies (R3) and (R5), the parallel channel distillation rate is
% \begin{equation}
%     \mc{R}_g(\mc{N})\leq\mc{R}_{\mathrm{distill}}(\mc{N})\leq \lim_{n\to\infty}\frac{1}{n}\mc{R}_g\left(\mc{N}^{\otimes n}\right).
% \end{equation}
% The equal sign is achieved if $\mc{R}_g$ also satisfies (R4').

\noindent Our results verify the intuition that the resource generating/boosting power quantifies how much resource a channel can generate/boost in the asymptotic operational setting. Furthermore, they indicate a possible separation between iterative and prallel protocols in the distillation task, as $\mc R_b$ is generally larger than $\mc R_g$. Meanwhile, we find that the parallel and iterative dilution protocols are actually equivalent, and lower bounded by $\mc{R}_b(\mc{N})$. 
% Furthermore, for iterative channel distillation, we also show that the distillation rate is quantified by the resource boosting power $\mc{R}_b(\mc{N})=\max_{\rho\in\mathcal{D}(\mathcal{H}_{AE})}\left(\mu(\mathcal{N}\otimes \mathrm{id}(\rho))-\mu(\rho)\right)$. Here we require that $\mu$ satisfies additional requirements similar to (R3-5) and leave the detailed statements in Supplementary Materials.

% \begin{theorem}\label{Thm:generaldisdillexact}
% The generalized channel distillation rate is
% \begin{equation}
% 	\tilde{\mc{R}}_{\mathrm{distill}}(\mc{N})=\mc{R}_b(\mc{N}),
% \end{equation}
% when $\mc{R}_b$ satisfy (R3) and (R5). 
% \end{theorem}

% \emph{\textbf{Result} 8.---}When $\mc{R}_b$ satisfies (R3), the iterative channel distillation rate is
% $
% 	\tilde{\mc{R}}_{\mathrm{distill}}(\mc{N})\le\mc{R}_b(\mc{N})
% $.
% The equal sign is achieved when the state resource theory is asymptotically reversible. 

% $\mu_d$ is the asymptotic distillation rate for an

% On the other hand, we find that %the dilution rates cannot be explicitly characterized by a simple channel monotone. Nevertheless, 
% for channel resource theories satisfying the two assumptions, the parallel and iterative dilution protocols are actually equivalent, and lower bounded by $\mc{R}_b(\mc{N})$.

% \begin{theorem}\label{equivalent}
%   For any channel resource theory satisfying the two assumptions, we have $\mc{R}_{\text{dilute}}(\mc{N})= \tilde{\mc{R}}_{\text{dilute}}(\mc{N})$.
% \end{theorem}

\begin{theorem}
When $\mathfrak{G}=\{\mc G_\star\}$, the parallel and iterative channel dilution rates satisfy
\begin{equation}
    \mc{R}_{\text{dilute}}(\mc{N})= \tilde{\mc{R}}_{\text{dilute}}(\mc{N})\geq \mc{R}_b(\mc{N}),
\end{equation}
where we assume that $\mc{R}_b(\mc{N})$ satisfies (R3) and (R5).
\end{theorem}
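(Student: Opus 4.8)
The plan is to prove the equality and the lower bound separately, both leaning on the fact that $\mc{G}_\star$ is a replacer channel, $\mc{G}_\star(\cdot)=\Tr(\cdot)\,\rho_m$, whose output is independent of its input. For the equality, I would invoke the chain $\tilde{\mc{R}}_{\text{dilute}}(\mc{N})\le\mc{R}_{\text{dilute}}(\mc{N})$ already recorded in the text and prove the reverse by \emph{parallelizing} any iterative protocol. Given $\Lambda_1(\mc{G}_\star)\circ\cdots\circ\Lambda_{nR}(\mc{G}_\star)$ with $\Lambda_i(\mc{G}_\star)=\phi_{i,1}\circ(\mc{G}_\star\otimes\mathrm{id})\circ\phi_{i,2}$, I would first merge consecutive free maps $\phi_{i,1}\circ\phi_{i-1,2}$ (using that $\Phi$ is closed under composition, since the composition of resource-nongenerating maps is resource-nongenerating), so the composed channel becomes an alternation of free operations and fresh $\rho_m$-injections. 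Because each injection traces out its input and prepares a state independent of the iterative history, I would commute all $nR$ injections to the front, turning them into a single parallel preparation realized by $\mc{G}_\star^{\otimes nR}$, followed by one free map $\phi_1'$ that routes the $nR$ copies of $\rho_m$ together with the genuine input through the accumulated free operations. With a free pre-processing $\phi_2'$ feeding dummy inputs to the replacers, this exhibits the iterative composition exactly as $\Lambda'(\mc{G}_\star^{\otimes nR})$ with $\Lambda'\in\mc{O}$, where freeness of $\phi_1',\phi_2'$ follows from closure of $\Phi$ under composition and the tensor-product assumption $\phi\otimes\mathrm{id}\in\Phi$. Hence any iterative protocol at rate $R$ and error $\varepsilon$ yields a parallel one at the same rate and error, giving $\mc{R}_{\text{dilute}}(\mc{N})\le\tilde{\mc{R}}_{\text{dilute}}(\mc{N})$ and therefore equality.

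For the bound $\tilde{\mc{R}}_{\text{dilute}}(\mc{N})\ge\mc{R}_b(\mc{N})$, I would run a monotone argument on the iterative protocol. The key lemma is that $\mc{R}_b$ is \emph{subadditive under channel composition}, $\mc{R}_b(\mc{C}_1\circ\mc{C}_2)\le\mc{R}_b(\mc{C}_1)+\mc{R}_b(\mc{C}_2)$, which I would prove by inserting the intermediate state $\tau=(\mc{C}_2\otimes\mathrm{id})(\rho)$ and telescoping $\mu((\mc{C}_1\circ\mc{C}_2)(\rho))-\mu(\rho)=[\mu(\mc{C}_1(\tau))-\mu(\tau)]+[\mu(\tau)-\mu(\rho)]$, each bracket being bounded by the respective boosting power. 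Iterating over the $nR$ factors and using monotonicity (Theorem~1) together with normalization (R3), namely $\mc{R}_b(\Lambda_i(\mc{G}_\star))\le\mc{R}_b(\mc{G}_\star)=1$, I obtain $\mc{R}_b\big(\Lambda_1(\mc{G}_\star)\circ\cdots\circ\Lambda_{nR}(\mc{G}_\star)\big)\le nR$.

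I would then transfer this bound to $\mc{N}^{\otimes n}$. Continuity (R5) gives $\mc{R}_b(\mc{N}^{\otimes n})\le nR+O(f(\varepsilon)\,n\log d)$, the extra factor $n$ arising because $\mc{N}^{\otimes n}$ has output dimension $d^n$. A product-state ansatz $\rho=\sigma^{\otimes n}$ with $\sigma$ optimal for $\mc{N}$, combined with additivity of $\mu$ on product states (inherited from the assumed (R1-5)-type properties of $\mu$, and valid for the distance-based measures of interest such as the relative entropy), single-letterizes the left-hand side into $\mc{R}_b(\mc{N}^{\otimes n})\ge n\,\mc{R}_b(\mc{N})$. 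Dividing by $n$, sending $n\to\infty$ and then $\varepsilon\to0^+$ so that $f(\varepsilon)\to0$, yields $\tilde{\mc{R}}_{\text{dilute}}(\mc{N})\ge\mc{R}_b(\mc{N})$, which together with the equality completes the claim.

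I expect the parallelization step in the equality to be the main obstacle: one must argue rigorously that the interleaved free operations collapse into a single free super-channel acting on $\mc{G}_\star^{\otimes nR}$, with no resource lost or gained in the rearrangement. This is precisely where the replacer property of $\mc{G}_\star$ is indispensable, since for a generic unit-resource channel the output would depend on the iterative history and the two protocols could genuinely differ. The lower-bound half is comparatively routine once the composition-subadditivity lemma is in place, the only care points being the $n\log d$ scaling in (R5) and the additivity of $\mu$ needed for single-letterization.
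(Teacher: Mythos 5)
Your proposal is correct, but only half of it coincides with the paper's proof, so a comparison is worthwhile. For the equality $\mc{R}_{\text{dilute}}(\mc{N})=\tilde{\mc{R}}_{\text{dilute}}(\mc{N})$ you argue exactly as the paper does: both proofs exploit the replacer property of $\mc{G}_\star$ (the paper's identity $(\mc{G}_{\star A}\otimes\mathrm{id}_B)\circ\phi_{AB}=(\mathrm{id}_A\otimes\phi'_B)\circ(\mc{G}_{\star A}\otimes\mathrm{id}_B)$) to commute the interleaved free maps past the resource channels and collapse any iterative protocol into a parallel one, and both leave the bookkeeping of that rearrangement at the same informal level. For the lower bound, however, your route is genuinely different. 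The paper first invokes the equality so that it only needs to analyze a parallel protocol $\phi_2\circ(\mc{G}_\star^{\otimes nR}\otimes\mathrm{id})\circ\phi_1$, and then works at the state level: continuity of $\mc{R}_b$, followed by monotonicity and additivity of $\mu$ applied to the explicit output $\rho_m^{\otimes nR}\otimes(\cdot)$, give $\mc{R}_b(\mc{N}^{\otimes n})\le nR+O(\varepsilon n)$, and superadditivity of $\mc{R}_b$ (the paper's superadditivity lemma, which requires (S4)) finishes. You instead bound the iterative rate directly through a composition-subadditivity lemma $\mc{R}_b(\mc{C}_1\circ\mc{C}_2)\le\mc{R}_b(\mc{C}_1)+\mc{R}_b(\mc{C}_2)$ — a channel-level repackaging of the telescoping the paper itself uses for its iterative \emph{distillation} bound — combined with monotonicity under $\mc{O}$, normalization (R3), continuity (R5), and the same superadditivity step. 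Your telescoping argument is sound, and your route buys generality: as you prove it, $\tilde{\mc{R}}_{\text{dilute}}(\mc{N})\ge\mc{R}_b(\mc{N})$ nowhere uses the replacer property, so it holds for any normalized set $\mathfrak{G}$ and is logically independent of the parallelization step, whereas the paper's lower bound is tied to $\mc{G}_\star$ through that reduction; the paper's route, conversely, needs almost no extra machinery once the equality is in hand. One shared caveat: both arguments require additivity of $\mu$ (the paper's (S4)) for the superadditivity/single-letterization step, an assumption the main-text theorem statement omits but which the corresponding supplementary result states explicitly; your parenthetical appeal to the assumed properties of $\mu$ is consistent with what the paper actually assumes there.
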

\noindent Note that the channel dilution protocol is closely related to channel simulation using state resource, which has been studied for various resource theories~\cite{chitambar2018quantum}. In general, one can also study other manipulation power of channels, where different (maybe also nonequivalent) optimal resource channels can be considered for different purposes. We leave the general case for future work as this work aims to establish the basic resource framework of channels and study its interplay with state resource theories.

% \emph{\textbf{Result} 9.---}For any channel resource theory satisfying the two assumptions, we have $\mc{R}_{\text{dilute}}(\mc{N})= \tilde{\mc{R}}_{\text{dilute}}(\mc{N})$. Furthermore, if $\mc{R}_b(\mc{N})$ satisfies (R3) and (R5), we also have $\mc{R}_{\text{dilute}}(\mc{N})= \tilde{\mc{R}}_{\text{dilute}}(\mc{N})\geq \mc{R}_b(\mc{N})$.
% %For any channel resource theory satisfying the two assumptions, we have $\mc{R}_{\text{dilute}}(\mc{N})= \tilde{\mc{R}}_{\text{dilute}}(\mc{N})$.

Following the quantification of the asymptotic rates, a particular interesting property to study for an operational resource theory is the asymptotic reversibility, which exists in both the resource theories of entanglement and coherence~\cite{brandao2008entanglement,Brandao2010reversible,brandao2015reversible,zhao2018oneshot}. In the context of channel resource theory, asymptotic reversibility is defined as $\mc{R}_{\text{distill}}(\mc{N})=\mc{R}_{\text{dilute}}(\mc{N})$
% (or $\tilde{\mc{R}}_{\text{distill}}(\mc{N})=\tilde{\mc{R}}_{\text{dilute}}(\mc{N})$) 
for all quantum channel $\mc{N}$. 
% Our result implies that for an asymptotically reversible channel resource theory in which an asymptotic resource measure exists, then the asymptotic resource measure must be unique.
{It is unclear whether a reversible channel resource theory exists without specifying the resource structure. When focusing on resource generation, our results indicate that the channel resource theory is irreversible as long as there exists a channel $\mc N$ such that $\mc R_b(\mc N)>\lim_{n\to\infty}\frac{1}{n}\mc{R}_g\left(\mc{N}^{\otimes n}\right)$.} While our results contribute as a first step, establishing the reversibility criteria for general channel resource theories requires extensive study on the asymptotic behavior of channel resource quantifiers~\cite{brandao2015reversible,gour2019quantify}. In the following, we give an explicit example in the context of quantum coherence. We show that even though the state resource theory of coherence is reversible, the corresponding channel resource theory is not. 

%, where the state $\ket{\Psi_d}=\frac{1}{\sqrt{d}}\sum_{i=0}^{d-1}\ket{i}$ is often chosen as the canonical resource state and all other states can be obtained by applying free phase operations
\emph{Coherence.---}We first review the resource theory of coherence. For a fixed basis $\{\ket{i}\}_{i=0}^{d-1}$ in a $d$-dimensional Hilbert space, free states or incoherent states are those diagonal in the basis, i.e.,  $\mc{I}=\left\{\delta|\delta=\sum_{i=0}^{d-1}\delta_i\ket{i}\bra{i}\right\}$. Meanwhile, maximal resource states are those with a uniform superposition of basis states. For free operations, we consider the maximal set of operations, maximal incoherent operations (MIO)~\cite{aberg2006quantifying}, which map incoherent states to incoherent states. Interestingly, the resource theory of coherence is asymptotically reversible under MIO~\cite{zhao2018oneshot}, characterized by the relative entropy of coherence $C_r(\rho)=\min_{\delta\in\mc{I}}S(\rho\|\delta)$. Here $S(\rho\|\sigma)=\Tr(\rho\log\rho)-\Tr(\rho\log\sigma)$ is the quantum relative entropy.

Now we construct a channel resource theory of coherence to characterize the coherence generating power. This has been partially done in several works \cite{Mani2015cohering,Zanardi2017measures,Zanardi2017coherence,BU20171670,Dana2018resource,Diaz2018usingreusing,theurer2018quantifying,Dana2018resource,Diaz2018usingreusing}, whereas they did not treat channel coherence as an operational resource. Following our resource framework, we define free channels as  resource non-generating channels, i.e., MIO. Free super-operations are a subset of super-channels that transform a quantum channel into another using MIO, $\mc{N}\to\mc{M}_1\circ\left(\mc{N}\otimes \mathrm{id}\right)\circ\mc{M}_2$ for all $\mc{M}_1,\mc{M}_2\in\mathrm{MIO}$. Finally, following our general results, we have four resource monotones
% ${\mc{C}}_{r}(\mc{N})=\min_{\mc{M}\in\mathrm{MIO}}\max_{\rho\in\mc{D}(\mc{H}_{AE})}S\left(\mc{N}\otimes \mathrm{id}(\rho)\|\mc{M}\otimes \mathrm{id}(\rho)\right)$,\\ $\mc{C}_{r,\mc I}(\mc{N})=\min_{\mc{M}\in\mathrm{MIO}}\max_{\delta\in\mc{I}_{AE}}S\left(\mc{N}\otimes \mathrm{id}(\delta)\|\mc{M}\otimes \mathrm{id}(\delta)\right)$,\\ $\mc{C}_{r,g}(\mc{N})= \max_{\delta\in\mathcal{I}_{AE}}C_r(\mc{N}\otimes \mathrm{id}(\delta))$,\\ $\mc{C}_{r,b}(\mc{N})=\max_{\rho\in\mathcal{D}(\mathcal{H}_{AE})}\left(C_r(\mathcal{N}\otimes \mathrm{id}(\rho))-C_r(\rho)\right)$.
\begin{equation}\nonumber
    \begin{aligned}
{\mc{C}}_{r}(\mc{N})&=\min_{\mc{M}\in\mathrm{MIO}}\max_{\rho\in\mc{D}(\mc{H}_{AE})}S\left(\mc{N}\otimes \mathrm{id}(\rho)\|\mc{M}\otimes \mathrm{id}(\rho)\right),\\ 
\mc{C}_{r,\mc I}(\mc{N})&=\min_{\mc{M}\in\mathrm{MIO}}\max_{\delta\in\mc{I}_{AE}}S\left(\mc{N}\otimes \mathrm{id}(\delta)\|\mc{M}\otimes \mathrm{id}(\delta)\right),\\ 
\mc{C}_{r,g}(\mc{N})&= \max_{\delta\in\mathcal{I}_{AE}}C_r(\mc{N}\otimes \mathrm{id}(\delta)),\\ \mc{C}_{r,b}(\mc{N})&=\max_{\rho\in\mathcal{D}(\mathcal{H}_{AE})}\left(C_r(\mathcal{N}\otimes \mathrm{id}(\rho))-C_r(\rho)\right).
    \end{aligned}
\end{equation}
% $\mc C_{r}(\N)$, $\mc C_{r,\mc I}(\N)$, $\mc C_{r,g}(\N)$, and $\mc C_{r,b}(\N)$ can be defined as
% \begin{equation}\label{channelcoherencemeasure}
% \begin{aligned}
%   {\mc{C}}_{r}(\mc{N})&=\min_{\mc{M}\in\mathrm{MIO}}\max_{\rho\in\mc{D}(\mc{H}_{AE})}S\left(\mc{N}\otimes \mathrm{id}(\rho)\|\mc{M}\otimes \mathrm{id}(\rho)\right),\\
%   \mc{C}_{r,\mc I}(\mc{N})&=\min_{\mc{M}\in\mathrm{MIO}}\max_{\delta\in\mc{I}_{AE}}S\left(\mc{N}\otimes \mathrm{id}(\delta)\|\mc{M}\otimes \mathrm{id}(\delta)\right),\\
% \mc{C}_{r,g}(\mc{N})&= \max_{\delta\in\mathcal{I}_{AE}}C_r(\mc{N}\otimes \mathrm{id}(\delta)),\\
% \mc{C}_{r,b}(\mc{N})&=\max_{\rho\in\mathcal{D}(\mathcal{H}_{AE})}\left(C_r(\mathcal{N}\otimes \mathrm{id}(\rho))-C_r(\rho)\right),
% \end{aligned}
% \end{equation}
% where we omitted subscripts $AE$ when there's no confusion. 

To further study these four monotones, we utilize the resource structure of coherence, including that it has a resource destroying map~\cite{Liu2017resource}.  %$\Delta(\rho)=\sum_i \ket{i}\bra{i}\rho\ket{i}\bra{i}$. %Moreover, the relative entropy of coherence can be simplified as $C_r(\rho)=S\left(\rho\|\Delta(\rho)\right)$.
%, which corresponds to the case of $\mu=\mu_\lambda$ in our general resource framework. 
Generalizing Theorem 2, we show that the channel resource monotone $\mc{C}_{r,\mc I}(\mc{N})$ actually equals to the resource generating power $\mc{C}_{r,g}(\mc{N})$,
\begin{equation}
	\mc{C}_{r,\mc I}(\mc{N})=\mc{C}_{r,g}(\mc{N}).
\end{equation}
%This result gives a direct connection from our proposed channel resource theory to coherence theory, and gives an operational interpretation for the relative entropy of channel coherence. 
Furthermore, using the convexity of $C_r$ we have $\mc{C}_{r,\mc I}(\mc{N})=\max_{i}C_r(\mathcal{N}(\ket{i}\bra{i}))$, showing that $\mc{C}_{r,\mc I}(\mc{N})$ is efficiently computable. This simplified expression allows us to prove many useful properties and conclude that $\mc{C}_{r,\mc I}(\mc{N})$ is actually an asymptotic resource measure satisfying (R1-5).
By definition, we have $\mc{C}_{r,\mc I}(\mc{N})\le \mc{C}_{r}(\mc{N})$ and $\mc{C}_{r,g}(\mc{N})\le \mc{C}_{r,b}(\mc{N})$. However, as the relative entropy violates the triangle inequality, we cannot decide the relation between $\mc{C}_{r}(\mc{N})$ and $\mc{C}_{r,b}(\mc{N})$. %We cannot efficiently compute $\mc{C}_{r}(\mc{N})$ or $\mc{C}_{r,b}(\mc{N})$ either.

Considering channel distillation/dilution, we define optimal unit resource channels as $\mathfrak{G}=\left\{\mc{G}_\star|\mc{G}_\star(\rho)=\Psi_2\right\}$ with the maximal coherent qubit state $\ket{\Psi_2}=(\ket{0}+\ket{1})/\sqrt{2}$.
As shown in Ref.~\cite{Diaz2018usingreusing}, such a channel can simulate an arbitrary channel that outputs a qubit. 
% We show that the set satisfies the aforementioned two assumptions and any channel can be simulated from an optimal unit resource channel.
% These channels are also those that achieves maximum of the four resource monotones. Therefore, it would be interesting to study the conversion between an arbitrary channel and the optimal unit resource channels.
Denote the asymptotic parallel (iterative) channel coherence distillation/dilution rate as $\mc{C}_{\mathrm{distill}}$ ($\tilde{\mc{C}}_{\mathrm{distill}}$) and $\mc{C}_{\mathrm{dilute}}$ ($\tilde{\mc{C}}_{\mathrm{dilute}}$), respectively. As an immediate consequence of Theorem 3 and 4, it follows that $\mc{C}_{\mathrm{distill}}(\mc{N})\leq \mc{C}_{r,\mc I}(\mc{N}) \leq \mc{C}_{\mathrm{dilute}}(\mc{N})$ and $\tilde{\mc{C}}_{\mathrm{distill}}(\mc{N})=\mc{C}_{r,b}(\mc{N})$.
% and $\tilde{\mc{C}}_{\mathrm{distill}}(\mc{N})\leq \mc{C}_{r,b}(\mc{N}) \leq \tilde{\mc{C}}_{\mathrm{dilute}}(\mc{N})$. 
% The converse part, including the lower bound for distillation and upper bound for dilution, requires explicit construction of a distillation/dilution protocol. 
For parallel distillation, we are also able to show that the equal sign is achieved, due to the additivity of $\mc{C}_{r,\mc I}(\mc{N})$.

% \begin{theorem}\label{Thm:distillation}
%   The standard and generalized asymptotic distillation rates of channel coherence are $\mc{C}_{\mathrm{distill}}(\mc{N})=\mc{C}_{r,\mc I}(\mc{N})$ and $\tilde{\mc{C}}_{\mathrm{distill}}(\mc{N})=\mc{C}_{r,b}(\mc{N})$, respectively.
% \end{theorem}

\begin{corollary}The parallel and iterative asymptotic distillation rates of channel coherence are $\mc{C}_{\mathrm{distill}}(\mc{N})=\mc{C}_{r,\mc I}(\mc{N})$ and $\tilde{\mc{C}}_{\mathrm{distill}}(\mc{N})=\mc{C}_{r,b}(\mc{N})$, respectively.
\end{corollary}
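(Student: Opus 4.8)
The plan is to obtain both equalities as consequences of Theorem 4, treating the iterative case as essentially immediate and the parallel case as requiring one extra structural fact, namely the additivity of $\mc{C}_{r,\mc I}$. Everything else (that $\mc{C}_{r,\mc I}$ satisfies (R3) and (R5), and that $\mc{C}_{r,g}=\mc{C}_{r,\mc I}$) has already been established, so the corollary should reduce to checking the hypotheses of Theorem 4 and supplying the one missing monotone property.

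For the iterative rate, I would invoke the last clause of Theorem 4: $\tilde{\mc{R}}_{\mathrm{distill}}(\mc{N})=\mc{R}_b(\mc{N})$ holds whenever the underlying state resource theory is asymptotically reversible. Since the relative-entropy resource theory of coherence is asymptotically reversible under MIO~\cite{zhao2018oneshot}, specializing $\mc{R}_b$ to $\mc{C}_{r,b}$ yields $\tilde{\mc{C}}_{\mathrm{distill}}(\mc{N})=\mc{C}_{r,b}(\mc{N})$. Before applying the theorem I would verify its hypothesis that $\mc{C}_{r,b}$ satisfies the normalization (R3); this reduces to evaluating $\mc{C}_{r,b}(\mc{G}_\star)$ for the constant channel $\mc{G}_\star(\rho)=\Psi_2$. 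Because $\mc{G}_\star\otimes\mathrm{id}(\rho_{AE})=\Psi_2\otimes\rho_E$ with $\rho_E$ the reduced state, and $C_r$ is additive on products while being monotone under the partial trace (an incoherent map), one gets $C_r(\mc{G}_\star\otimes\mathrm{id}(\rho_{AE}))-C_r(\rho_{AE})=1+C_r(\rho_E)-C_r(\rho_{AE})\le 1$, with equality at any incoherent input, so that $\mc{C}_{r,b}(\mc{G}_\star)=1$.

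For the parallel rate, Theorem 4 gives the sandwich $\mc{C}_{r,g}(\mc{N})\le\mc{C}_{\mathrm{distill}}(\mc{N})\le\lim_{n\to\infty}\frac{1}{n}\mc{C}_{r,g}(\mc{N}^{\otimes n})$, whose equality clause is triggered once $\mc{R}_g$ obeys subadditivity (R4$'$). Using $\mc{C}_{r,g}(\mc{N})=\mc{C}_{r,\mc I}(\mc{N})$ together with the already-established (R3) and (R5), the only missing ingredient is (R4$'$). I would prove the stronger additivity $\mc{C}_{r,\mc I}(\mc{N}\otimes\mc{M})=\mc{C}_{r,\mc I}(\mc{N})+\mc{C}_{r,\mc I}(\mc{M})$ directly from the single-letter formula $\mc{C}_{r,\mc I}(\mc{N})=\max_i C_r(\mc{N}(\ket{i}\bra{i}))$: on the product channel the optimal incoherent input ranges over product basis states $\ket{i}\otimes\ket{j}$, so $\mc{C}_{r,\mc I}(\mc{N}\otimes\mc{M})=\max_{i,j}C_r(\mc{N}(\ket{i}\bra{i})\otimes\mc{M}(\ket{j}\bra{j}))$, and the additivity of the state relative entropy of coherence $C_r(\alpha\otimes\beta)=C_r(\alpha)+C_r(\beta)$ decouples the maximization into $\max_i C_r(\mc{N}(\ket{i}\bra{i}))+\max_j C_r(\mc{M}(\ket{j}\bra{j}))$. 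Feeding (R4$'$) back into Theorem 4, subadditivity forces $\frac{1}{n}\mc{C}_{r,g}(\mc{N}^{\otimes n})\le\mc{C}_{r,g}(\mc{N})$, collapsing the regularized upper bound, and the sandwich closes to $\mc{C}_{\mathrm{distill}}(\mc{N})=\mc{C}_{r,g}(\mc{N})=\mc{C}_{r,\mc I}(\mc{N})$.

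The genuinely substantive step is the additivity of $\mc{C}_{r,\mc I}$, and within it the reduction to the single-letter form $\max_i C_r(\mc{N}(\ket{i}\bra{i}))$; once that form is in hand the argument is routine, since relative-entropy coherence of states is additive on tensor products. The one point I would be careful about is justifying that the maximization over incoherent inputs of the product channel is attained on product basis states rather than on some correlated incoherent state with ancilla — this is exactly where the convexity argument underlying the single-letter formula, applied to each tensor factor separately, does the work.
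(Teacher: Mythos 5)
Your proposal is correct and takes essentially the same route as the paper: both obtain the corollary by specializing Theorem 4 (the paper's Results 7 and 8) to coherence, using the additivity of $\mc{C}_{r,\mc I}$ (which the paper proves in its Lemma 4 from convexity and additivity of $C_r$ --- the same content as your single-letter-formula argument) to close the parallel sandwich, and the MIO-reversibility of state coherence for the iterative equality. Your explicit verification that $\mc{C}_{r,b}(\mc{G}_\star)=1$ is a hypothesis check the paper leaves implicit, but it is correct and consistent with the requirements of Result 8.
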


% \emph{\textbf{Result} 10.---}The parallel and iterative asymptotic distillation rates of channel coherence are $\mc{C}_{\mathrm{distill}}(\mc{N})=\mc{C}_{r,\mc I}(\mc{N})$ and $\tilde{\mc{C}}_{\mathrm{distill}}(\mc{N})=\mc{C}_{r,b}(\mc{N})$, respectively.

\noindent Note that there exists quantum channel $\mc{N}$ such that $\mc{C}_{r,b}(\mc{N})> \mc{C}_{r,\mc I}(\mc{N})$ (see Supplementary Materials for the explicit example) and therefore $\mc{C}_{\mathrm{dilute}}(\mc{N})=\tilde{\mc{C}}_{\mathrm{dilute}}(\mc{N})\ge \tilde{\mc{C}}_{\mathrm{distill}}(\mc{N})>\mc{C}_{\mathrm{distill}}(\mc{N})$, indicating the irreversibility of the operational resource theory of channel coherence.
% \lyc{under parallel manipulations}. 

% \begin{theorem}
% 	The channel resource theory of coherence is irreversible.
% \end{theorem}

\begin{corollary}
The channel resource theory of coherence  is asymptotically irreversible.
\end{corollary}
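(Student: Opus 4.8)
The plan is to prove irreversibility by exhibiting a single channel $\mc{N}$ whose dilution rate strictly exceeds its distillation rate, so that the defining equality $\mc{C}_{\mathrm{distill}}(\mc{N})=\mc{C}_{\mathrm{dilute}}(\mc{N})$ fails. First I would assemble the ordering of the four rates. The general relation stated after Definition 2 gives $\mc{C}_{\mathrm{distill}}(\mc{N})\le\tilde{\mc{C}}_{\mathrm{distill}}(\mc{N})\le\tilde{\mc{C}}_{\mathrm{dilute}}(\mc{N})\le\mc{C}_{\mathrm{dilute}}(\mc{N})$, and Theorem 5 specialized to coherence supplies the dilution equivalence $\mc{C}_{\mathrm{dilute}}(\mc{N})=\tilde{\mc{C}}_{\mathrm{dilute}}(\mc{N})$. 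Combining them yields $\mc{C}_{\mathrm{dilute}}(\mc{N})\ge\tilde{\mc{C}}_{\mathrm{distill}}(\mc{N})$. Corollary 1 then identifies the endpoints operationally, $\mc{C}_{\mathrm{distill}}(\mc{N})=\mc{C}_{r,\mc I}(\mc{N})$ and $\tilde{\mc{C}}_{\mathrm{distill}}(\mc{N})=\mc{C}_{r,b}(\mc{N})$, so that $\mc{C}_{\mathrm{dilute}}(\mc{N})\ge\mc{C}_{r,b}(\mc{N})$ while $\mc{C}_{\mathrm{distill}}(\mc{N})=\mc{C}_{r,\mc I}(\mc{N})$. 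It therefore suffices to produce one channel for which $\mc{C}_{r,b}(\mc{N})>\mc{C}_{r,\mc I}(\mc{N})$.

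Second, I would reduce both quantities to a form that makes the strict inequality checkable. On the distillation side I use the closed expression $\mc{C}_{r,\mc I}(\mc{N})=\max_i C_r(\mc{N}(\ket{i}\bra{i}))$ derived earlier, so this rate is simply the largest coherence the channel creates from an incoherent basis state and can be computed exactly. On the boosting side, since $\mc{C}_{r,b}(\mc{N})=\max_{\rho}\bigl(C_r(\mc{N}\otimes\mathrm{id}(\rho))-C_r(\rho)\bigr)$ is a maximum over all ancilla-assisted inputs, evaluating the bracket on any single fixed $\rho_{AE}$ already lower bounds it. The strategy is thus to find one entangled input $\rho_{AE}$ whose coherence the channel boosts by strictly more than $\max_i C_r(\mc{N}(\ket{i}\bra{i}))$.

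Third, I would supply the explicit separating channel, with details deferred to the Supplementary Material. The natural candidate is a concrete low-dimensional channel that is weak at creating coherence from incoherent inputs yet strongly amplifies coherence already present in its input when it can exploit correlations with the ancilla; for such a channel a judiciously chosen coherent, entangled $\rho_{AE}$ produces an output whose coherence exceeds $C_r(\rho_{AE})$ by a margin larger than its incoherent-input coherence. Verification reduces to two direct computations: evaluating $\max_i C_r(\mc{N}(\ket{i}\bra{i}))$ for the exact value of $\mc{C}_{r,\mc I}$, and evaluating $C_r(\mc{N}\otimes\mathrm{id}(\rho_{AE}))-C_r(\rho_{AE})$ on the chosen input to lower bound $\mc{C}_{r,b}$. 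Once the strict inequality $\mc{C}_{r,b}(\mc{N})>\mc{C}_{r,\mc I}(\mc{N})$ is in hand, the chain $\mc{C}_{\mathrm{dilute}}(\mc{N})\ge\mc{C}_{r,b}(\mc{N})>\mc{C}_{r,\mc I}(\mc{N})=\mc{C}_{\mathrm{distill}}(\mc{N})$ contradicts reversibility and finishes the proof.

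The main obstacle is precisely the third step: designing the channel and certifying the gap. The asymmetry is helpful, since $\mc{C}_{r,\mc I}$ is a clean optimization over finitely many basis states and so its value can be pinned down exactly, whereas for $\mc{C}_{r,b}$ I only need a good lower bound from one cleverly chosen input. The genuine difficulty is engineering a channel for which a concrete entangled input provably boosts the coherence beyond what every incoherent basis input can generate; this is where the physical intuition that iterative (sequential) channel use can amplify pre-existing coherence, while parallel use cannot, must be turned into an explicit, computable example.
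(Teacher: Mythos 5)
Your reduction is exactly the paper's: combine the ordering $\mc{C}_{\mathrm{distill}}\le\tilde{\mc{C}}_{\mathrm{distill}}\le\tilde{\mc{C}}_{\mathrm{dilute}}\le\mc{C}_{\mathrm{dilute}}$ with Theorem 5 (equivalently, with Corollary 1's identification $\tilde{\mc{C}}_{\mathrm{distill}}=\mc{C}_{r,b}$) to get $\mc{C}_{\mathrm{dilute}}\ge\mc{C}_{r,b}$, use $\mc{C}_{\mathrm{distill}}=\mc{C}_{r,\mc I}$, and then exhibit a channel with $\mc{C}_{r,b}(\mc{N})>\mc{C}_{r,\mc I}(\mc{N})$. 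However, there is a genuine gap: the last step, which you yourself flag as ``the main obstacle,'' is the entire mathematical content of this corollary, and your proposal never produces the separating channel --- it only describes what one should look like. Since the state theory of coherence \emph{is} reversible under MIO, one cannot take for granted that the boosting power ever strictly exceeds the generating power; without a concrete witness the chain $\mc{C}_{\mathrm{dilute}}\ge\mc{C}_{r,b}>\mc{C}_{r,\mc I}=\mc{C}_{\mathrm{distill}}$ has an unproven middle inequality, and irreversibility does not follow.

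The paper closes this gap with an example that is considerably simpler than the kind of channel you propose to hunt for. It takes the single-qubit unitary $\mc{U}(\rho)=U\rho U^\dag$ with $U=e^{-i\sigma_y\pi/10}$, computes $\mc{C}_{r,\mc I}(\mc{U})=\max_i C_r(\mc{U}(\ket{i}\bra{i}))=0.4545$ exactly via the closed-form expression you also cite, and certifies $\mc{C}_{r,b}(\mc{U})\ge 0.5684$ numerically. Note two things that cut against your search heuristic. First, the witness input for the boost needs no ancilla and no entanglement: a partially rotated pure state $\cos\alpha\ket{0}+\sin\alpha\ket{1}$ already does the job, because the coherence function $h(\cos^2\alpha)$ rises more steeply at intermediate angles than near $\alpha=0$, so a fixed rotation boosts the coherence of a partially coherent state by more than it generates from a basis state. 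Second, the example is not a channel ``weak at creating coherence'' --- it is a perfectly ordinary rotation; the separation is generic for such unitaries rather than the product of delicate engineering. Your framing in terms of iterative-versus-parallel amplification with entangled ancillas is not wrong as intuition, but it points toward a harder construction than is needed, and in its current form the proposal remains an unfinished proof outline rather than a proof.
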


% \emph{\textbf{Result} 11.---}{The channel resource theory of coherence  is asymptotically irreversible}.

For the dilution part, besides the lower bound given in Theorem 5, we consider the (smooth) max entropy of channel coherence, 
\begin{equation}\label{channelmaxentropy}
\begin{aligned}
  \mc{C}_{\mathrm{max}}^\varepsilon(\mc{N})=&\min_{\|\mc{N}-\mc{N}'\|_{\diamond}\leq\varepsilon}\log\min\big\{\lambda:\exists\mc{M}\in\mathrm{MIO},\\
  &\mc{N}'\leq\lambda\mc{M}\big\},
\end{aligned}
\end{equation}
which is used to characterize one-shot channel simulation using coherence~\cite{Diaz2018usingreusing}. We show that the asymptotic dilution rate is equal to its regularized version.

% \begin{theorem}\label{Thm:dilution}
%   The asymptotic dilution rate of channel coherence equals to the regularized max entropy of channel coherence, $\mc{C}_{\mathrm{dilute}}(\mc{N})=\tilde{\mc{C}}_{\mathrm{dilute}}(\mc{N})=\mc{C}_{\mathrm{max}}^\infty(\mc{N})$, where
%   \begin{equation}\label{channelmaxentropyregularized}
%   \mc{C}_{\mathrm{max}}^\infty(\mc{N})=\lim_{\varepsilon\to 0^+}\lim_{n\to\infty}\frac{1}{n}\mc{C}_{\mathrm{max}}^\varepsilon(\mc{N}^{\otimes n}).
% \end{equation}
% \end{theorem}

\begin{theorem}
The asymptotic dilution rate of channel coherence equals to the regularized max entropy of channel coherence,
\begin{equation}
    \mc{C}_{\mathrm{dilute}}(\mc{N})=\tilde{\mc{C}}_{\mathrm{dilute}}(\mc{N})=\mc{C}_{\mathrm{max}}^\infty(\mc{N}),
\end{equation}
where
%   \begin{equation}\label{channelmaxentropyregularized}
$ \mc{C}_{\mathrm{max}}^\infty(\mc{N})=\lim_{\varepsilon\to 0^+}\lim_{n\to\infty}\frac{1}{n}\mc{C}_{\mathrm{max}}^\varepsilon(\mc{N}^{\otimes n})$.
\end{theorem}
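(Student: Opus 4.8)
The plan is to identify the asymptotic dilution rate with the regularized one-shot simulation cost. By Theorem~5 we already have $\mc{C}_{\mathrm{dilute}}(\mc{N})=\tilde{\mc{C}}_{\mathrm{dilute}}(\mc{N})$, so it suffices to prove $\mc{C}_{\mathrm{dilute}}(\mc{N})=\mc{C}_{\mathrm{max}}^\infty(\mc{N})$ for the parallel rate. The key input is the one-shot characterization of Ref.~\cite{Diaz2018usingreusing}: the channel $\mc{N}^{\otimes n}$ can be simulated to diamond-norm error $\varepsilon$ by an MIO operation consuming a maximally coherent resource state $\Psi_2^{\otimes m}$ precisely when $m$ is essentially $\mc{C}_{\mathrm{max}}^\varepsilon(\mc{N}^{\otimes n})$. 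Since each optimal unit channel $\mc{G}_\star$ outputs exactly one copy of $\Psi_2$, a block of $nR$ copies of $\mc{G}_\star$ supplies $nR$ coherent qubits, and the task reduces to matching $nR$ with $\mc{C}_{\mathrm{max}}^\varepsilon(\mc{N}^{\otimes n})$ and then regularizing.

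For achievability ($\mc{C}_{\mathrm{dilute}}(\mc{N})\le\mc{C}_{\mathrm{max}}^\infty(\mc{N})$) I would set $R=\tfrac1n\lceil\mc{C}_{\mathrm{max}}^\varepsilon(\mc{N}^{\otimes n})\rceil$ and build an explicit free super-operation $\Lambda\in\mc{O}$: the pre-processing $\mc{V}$ feeds a fixed input into $\mc{G}_\star^{\otimes nR}$ (whose output is the fixed resource $\Psi_2^{\otimes nR}$ regardless of input) while routing the genuine channel input through the identity wire, and the post-processing $\mc{M}_1$ is the MIO simulation map of Ref.~\cite{Diaz2018usingreusing}. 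Both $\mc{V}$ and $\mc{M}_1$ are incoherent, so $\Lambda\in\mc{O}$ and $\|\Lambda(\mc{G}_\star^{\otimes nR})-\mc{N}^{\otimes n}\|_\diamond\le\varepsilon$. The ceiling contributes $O(1/n)$ and vanishes in the limit, giving $\mc{C}_{\mathrm{dilute}}(\mc{N})\le\lim_{\varepsilon\to0^+}\lim_{n\to\infty}\tfrac1n\mc{C}_{\mathrm{max}}^\varepsilon(\mc{N}^{\otimes n})=\mc{C}_{\mathrm{max}}^\infty(\mc{N})$.

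For the converse ($\mc{C}_{\mathrm{dilute}}(\mc{N})\ge\mc{C}_{\mathrm{max}}^\infty(\mc{N})$) I would use two structural properties of the unsmoothed quantity $\mc{C}_{\mathrm{max}}^0(\mc{K})=\log\min\{\lambda:\exists\mc{M}\in\mathrm{MIO},\ \mc{K}\le\lambda\mc{M}\}$. First, monotonicity under $\mc{O}$: if $\mc{K}\le\lambda\mc{M}$ then $\mc{M}_1\circ(\mc{K}\otimes\mathrm{id})\circ\mc{M}_2\le\lambda\,\mc{M}_1\circ(\mc{M}\otimes\mathrm{id})\circ\mc{M}_2$, and the right-hand factor is again an MIO channel, so $\mc{C}_{\mathrm{max}}^0(\Lambda(\mc{K}))\le\mc{C}_{\mathrm{max}}^0(\mc{K})$. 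Second, normalization $\mc{C}_{\mathrm{max}}^0(\mc{G}_\star^{\otimes m})=m$, which follows because $\mc{G}_\star^{\otimes m}$ is a constant channel and its value reduces to the max relative entropy of coherence $C_{\max}(\Psi_2^{\otimes m})=m$. Then for any $\varepsilon$-dilution protocol of rate $R$, the produced channel $\Lambda(\mc{G}_\star^{\otimes nR})$ lies in the diamond $\varepsilon$-ball of $\mc{N}^{\otimes n}$, so by the smoothing minimization, monotonicity, and normalization, $\mc{C}_{\mathrm{max}}^\varepsilon(\mc{N}^{\otimes n})\le\mc{C}_{\mathrm{max}}^0(\Lambda(\mc{G}_\star^{\otimes nR}))\le\mc{C}_{\mathrm{max}}^0(\mc{G}_\star^{\otimes nR})=nR$. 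Dividing by $n$ and taking $n\to\infty$, $\varepsilon\to0^+$ yields $\mc{C}_{\mathrm{max}}^\infty(\mc{N})\le\mc{C}_{\mathrm{dilute}}(\mc{N})$; combining with achievability and Theorem~5 proves the claim.

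The main obstacle I anticipate is the faithful translation between the two formalisms rather than any single hard inequality: one must verify that the one-shot protocol of Ref.~\cite{Diaz2018usingreusing}, stated for a coherent resource state of a prescribed dimension, is realized exactly as a free super-operation acting on an integer number $nR$ of copies of $\mc{G}_\star$, and that the order of the smoothing limit $\varepsilon\to0^+$ and the regularization $n\to\infty$ in the rate definition matches that defining $\mc{C}_{\mathrm{max}}^\infty$. The secondary technical point is justifying $\mc{C}_{\mathrm{max}}^0(\mc{G}_\star^{\otimes m})=m$, namely reducing the constant-channel quantity to $C_{\max}$ of its output and invoking additivity of the max relative entropy of coherence of $\Psi_2^{\otimes m}$; this can be handled by a twirling argument showing the optimal dominating MIO channel may be taken constant.
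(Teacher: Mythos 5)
Your proposal is correct, and its two halves relate to the paper's proof differently. The achievability direction is essentially the paper's argument: the paper likewise converts the MIO simulation protocol of Ref.~\cite{Diaz2018usingreusing} into a dilution protocol by writing $\mc{M}\circ(\mc{G}_k\otimes\mathrm{id}_1)\otimes\mathrm{id}_2\approx\mc{N}$, feeding a dummy input to the constant resource channels and routing the true input through the identity wire, exactly as your $\mc{V}$/$\mc{M}_1$ construction does. The converse is where you genuinely diverge. The paper converts any dilution protocol back into a simulation protocol — exploiting that $\mc{G}_\star$ is a constant channel, so the state entering the post-processing is $\psi_k\otimes\rho'$ and the concatenation of pre-processing, partial trace, and post-processing is again a MIO — and then quotes the asymptotic statement of Ref.~\cite{Diaz2018usingreusing} that the simulation rate equals the regularized max entropy. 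You instead prove the converse entropically from the definition in Eq.~\eqref{channelmaxentropy}: monotonicity of the unsmoothed $\mc{C}_{\mathrm{max}}^0$ under free super-operations (valid, since $\mc{M}_1\circ((\lambda\mc{M}-\mc{K})\otimes\mathrm{id})\circ\mc{M}_2$ is CP and $\mc{M}_1\circ(\mc{M}\otimes\mathrm{id})\circ\mc{M}_2\in\mathrm{MIO}$), plus the bound $\mc{C}_{\mathrm{max}}^0(\mc{G}_\star^{\otimes m})\le m$, witnessed by the constant MIO channel outputting $I/2^m$ since $\Psi_2^{\otimes m}\le 2^m\cdot I/2^m$. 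This buys a more self-contained argument: your proof needs only the achievability half of Ref.~\cite{Diaz2018usingreusing}, whereas the paper leans on both directions of that external result; the price is that you must verify the monotone structure of $\mc{C}_{\mathrm{max}}^0$ yourself, and the paper's route is shorter and keeps the whole argument at the operational (protocol-conversion) level. One simplification to your secondary technical point: no twirling is needed for the normalization — for the converse only the upper bound $\mc{C}_{\mathrm{max}}^0(\mc{G}_\star^{\otimes m})\le m$ is used, and the matching lower bound, if desired, follows by applying any dominating inequality $\mc{G}_\star^{\otimes m}\le\lambda\mc{M}$ to an incoherent input state, which forces $\lambda\ge 2^{C_{\max}(\Psi_2^{\otimes m})}=2^m$.
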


% \emph{\textbf{Result} 12.---}  The asymptotic dilution rate of channel coherence equals to the regularized max entropy of channel coherence, $\mc{C}_{\mathrm{dilute}}(\mc{N})=\tilde{\mc{C}}_{\mathrm{dilute}}(\mc{N})=\mc{C}_{\mathrm{max}}^\infty(\mc{N})$, where
% %   \begin{equation}\label{channelmaxentropyregularized}
% $ \mc{C}_{\mathrm{max}}^\infty(\mc{N})=\lim_{\varepsilon\to 0^+}\lim_{n\to\infty}\frac{1}{n}\mc{C}_{\mathrm{max}}^\varepsilon(\mc{N}^{\otimes n})$.

% It is interesting to know whether the asymptotic limit in Eq.~\eqref{channelmaxentropyregularized} converges to our proposed relative entropy of channel coherence, resulting in an asymptotically reversible channel resource theory, or some other quantity that is constructed using other type of channel relative entropy~\cite{yuan2019hypothesis,gour2018entropy}. However, we would like to comment that obtaining 

\noindent Calculating the limit requires the development of a channel-analogy of the generalized Quantum Stein's Lemma~\cite{Brandao2010generalization}, which is beyond the scope of this paper and is left as an important future work.

\emph{Discussion.---}Our work introduces a general framework for the resource theory of quantum channels and study its interplay with the resource theory of states.  We also introduce the operational tasks of channel distillation and dilution and study their characterizations with channel resource measures.  We consider coherence as an example and find the channel resource theory of coherence asymptotically irreversible. 
% It is an open research direction of channel resource theories. 
Future works can extend our results to other quantum resources that have specific resource structure. 
As a higher level resource, channel resource theory can focus on the manipulation power of state resources, including coherence~\cite{Dana2018resource,Diaz2018usingreusing}, entanglement~\cite{Kaur_2017}, discord, magic~\cite{seddon2019quantifying, wang2019quantifying}, and thermodynamics. More interestingly, channel resources can be independent of state resources, such as quantum memory~\cite{Rosset2018resource,2018arXiv180903403S,yuan2019robustness} and channel purity. 
Meanwhile, previous works have mainly focused on certain properties of channels without considering quantum channels as operational resources. 
Completing the operational resource framework of channels requires extensive future works, such as studying basic entropic quantifiers of channels~\cite{Cooney2016,Leditzky2018approaches,yuan2019hypothesis,gour2018entropy} and investigating channel conversions~\cite{gour2018comparison}.
We hope our work establishes the basic framework and can inspire future works of channel resource theories. \\

% Applications in other state resource theories and quantum computing... 
% asymptotic reversible framework for channel resource theory... one-shot quantum information for channels, AEP, hypothesis testing \\

\begin{acknowledgments}
\emph{Acknowledgement.---}We acknowledge Qi Zhao for insightful discussions. 
Yunchao Liu was supported by the National Natural Science Foundation of China Grants No.~11875173 and No.~11674193, and the National Key R\&D Program of China Grants No.~2017YFA0303900 and No.~2017YFA0304004. 
Xiao Yuan was supported by the EPSRC National Quantum Technology Hub in Networked Quantum Information Technology (EP/M013243/1).\\ 

\emph{Note added.---}Recently we became aware of a closely related work by Zi-Wen Liu and Andreas Winter~\cite{liu2019resource} who independently propose a similar framework for channel resource theories and further study the robustness measure and its role in resource erasure.
\end{acknowledgments}

\bibliographystyle{apsrev4-1}
\bibliography{BibChannel}

\newpage
\onecolumngrid
\appendix

\section{Properties of Channel Resource Monotones}
We prove that the four proposed channel resource quantifiers satisfy the properties of resource monotones. Here Results~\ref{monotone1}-\ref{monotone3} correspond to Theorem 1 in the main text.

\begin{result}\label{monotone1}
For any distance quantifier $D(\rho,\sigma)$ satisfying nonnegativity and data-processing inequality, ${\mc{R}}(\mc{N})$ is a channel resource monotone.
\end{result}

\begin{proof}
  Recall the definition
  \begin{equation}\label{defRN}
    {\mc{R}}(\mc{N})=\min_{\mc{M}\in\mc{F}}{D}(\mc{N},\mc{M})
  \end{equation}
where ${D}(\mc{N},\mc{M})=\max_{\rho_{AE}\in\mc{D}(\mc{H}_{AE})}D(\mc{N}_A\otimes \mathrm{id}_E(\rho_{AE}),\mc{M}_A\otimes \mathrm{id}_E(\rho_{AE}))$. From the nonnegativity of $D(\rho,\sigma)$ and Eq.~\eqref{defRN} we know that $\mc{R}(\mc{N})\geq 0$ while $\mc{R}(\mc{M})=0$ for all $\mc{M}\in\mc{F}$. For the monotonicity requirement (R2), we have
\begin{equation}\label{RNmonotonicity}
  \begin{aligned}
  \mc{R}(\mc{N})&=\min_{\mc{M}\in\mc{F}}\max_{\rho_{AE}\in\mc{D}(\mc{H}_{AE})}D(\mc{N}_A\otimes \mathrm{id}_E(\rho_{AE}),\mc{M}_A\otimes \mathrm{id}_E(\rho_{AE}))\\
  &\geq \min_{\mc{M}\in\mc{F}}\max_{\rho_{AE'E}\in\mc{D}(\mc{H}_{AE'E})}D(\mc{N}_A\otimes \mathrm{id}_{E'}\otimes \mathrm{id}_{E}(\rho_{AE'E}),\mc{M}_A\otimes \mathrm{id}_{E'}\otimes \mathrm{id}_{E}(\rho_{AE'E}))\\
  &\geq \min_{\mc{M}\in\mc{F}}\max_{\rho_{AE'E}\in\mc{D}(\mc{H}_{AE'E})}D((\mc{N}_A\otimes \mathrm{id}_{E'})\circ\mc{V}\otimes \mathrm{id}_{E}(\rho_{AE'E}),(\mc{M}_A\otimes \mathrm{id}_{E'})\circ\mc{V}\otimes \mathrm{id}_{E}(\rho_{AE'E}))\\
  &\geq \min_{\mc{M}\in\mc{F}}\max_{\rho_{AE'E}\in\mc{D}(\mc{H}_{AE'E})}D(\mc{U}\circ(\mc{N}_A\otimes \mathrm{id}_{E'})\circ\mc{V}\otimes \mathrm{id}_{E}(\rho_{AE'E}),\mc{U}\circ(\mc{M}_A\otimes \mathrm{id}_{E'})\circ\mc{V}\otimes \mathrm{id}_{E}(\rho_{AE'E}))\\
  &\geq \min_{\mc{M}\in\mc{F}}\max_{\rho_{AE'E}\in\mc{D}(\mc{H}_{AE'E})}D(\mc{U}\circ(\mc{N}_A\otimes \mathrm{id}_{E'})\circ\mc{V}\otimes \mathrm{id}_{E}(\rho_{AE'E}),\mc{M}_{AE'}\otimes \mathrm{id}_{E}(\rho_{AE'E}))\\
  &=\mc{R}\left(\mc{U}\circ(\mc{N}\otimes \mathrm{id})\circ\mc{V}\right).
  \end{aligned}
\end{equation}
Here, the second line appends another system $E'$ which by definition is less optimal than the original system $AE$. The third line follows from the fact that choosing from the image of $\mc{V}\otimes \mathrm{id}_E$ is less optimal than choosing from all states. The fourth line follows from the data-processing inequality of $D(\rho,\sigma)$. The fifth line follows from the fact that $\mc{U}\circ(\mc{M}_A\otimes \mathrm{id}_{E'})\circ\mc{V}\in \mc F$ and we are optimizing over a larger set of $\mc{F}$.
\end{proof}

\begin{result}\label{monotone2}
  For any distance quantifier $D(\rho,\sigma)$ satisfying nonnegativity and data-processing inequality, ${\mc{R}}_{\Omega}(\mc{N})$ is a channel resource monotone.
\end{result}

\begin{proof}
  The nonnegativity (R1) follows similarly as above. For the monotonicity requirement, note that the definition of free super-operations are $\Lambda(\mc{N})=\phi_1\circ\left(\mc{N}\otimes \mathrm{id}\right)\circ\phi_2$ for $\phi_1,\phi_2\in\Phi$. Then,
  \begin{equation}\label{RomegaNmonotonicity}
  \begin{aligned}
  \mc{R}_\Omega(\mc{N})&=\min_{\mc{M}\in\mc{F}}\max_{\omega_{AE}\in\Omega_{AE}}D(\mc{N}_A\otimes \mathrm{id}_E(\omega_{AE}),\mc{M}_A\otimes \mathrm{id}_E(\omega_{AE}))\\
  &\geq \min_{\mc{M}\in\mc{F}}\max_{\omega_{AE'E}\in\Omega_{AE'E}}D(\mc{N}_A\otimes \mathrm{id}_{E'}\otimes \mathrm{id}_{E}(\omega_{AE'E}),\mc{M}_A\otimes \mathrm{id}_{E'}\otimes \mathrm{id}_{E}(\omega_{AE'E}))\\
  &\geq \min_{\mc{M}\in\mc{F}}\max_{\omega_{AE'E}\in\Omega_{AE'E}}D((\mc{N}_A\otimes \mathrm{id}_{E'})\circ\phi_1\otimes \mathrm{id}_{E}(\omega_{AE'E}),(\mc{M}_A\otimes \mathrm{id}_{E'})\circ\phi_1\otimes \mathrm{id}_{E}(\omega_{AE'E}))\\
  &\geq \min_{\mc{M}\in\mc{F}}\max_{\omega_{AE'E}\in\Omega_{AE'E}}D(\phi_2\circ(\mc{N}_A\otimes \mathrm{id}_{E'})\circ\phi_1\otimes \mathrm{id}_{E}(\omega_{AE'E}),\phi_2\circ(\mc{M}_A\otimes \mathrm{id}_{E'})\circ\phi_1\otimes \mathrm{id}_{E}(\omega_{AE'E}))\\
  &\geq \min_{\mc{M}\in\mc{F}}\max_{\omega_{AE'E}\in\Omega_{AE'E}}D(\phi_2\circ(\mc{N}_A\otimes \mathrm{id}_{E'})\circ\phi_1\otimes \mathrm{id}_{E}(\omega_{AE'E}),\mc{M}_{AE'}\otimes \mathrm{id}_{E}(\omega_{AE'E}))\\
  &=\mc{R}_\Omega\left(\phi_2\circ(\mc{N}\otimes \mathrm{id})\circ\phi_1\right).
  \end{aligned}
\end{equation}
Here, similar as above, the second line appends another system $E'$ which by definition is less optimal than the original system $AE$. The third line follows from the fact that choosing from the image of $\phi_1\otimes \mathrm{id}_E$ is less optimal than choosing from all free states. The fourth line follows from the data-processing inequality of $D(\rho,\sigma)$. The fifth line follows from the fact that $\phi_2\circ(\mc{M}_A\otimes \mathrm{id}_{E'})\circ\phi_1\in \mc F$ and we are optimizing over a larger set of $\mc{F}$.
\end{proof}

\begin{result}\label{monotone3}
  For a state resource monotone $\mu$ satisfying (S1-2), the resource generating powers $$\mc{R}_g(\mc{N})=\max_{\omega_{AE}\in\Omega_{AE}}\mu(\mc{N}_A\otimes \mathrm{id}_E(\omega_{AE}))$$ and $$\mc{R}_b(\mc{N})=\max_{\rho_{AE}\in\mathcal{D}(\mathcal{H}_{AE})}\left(\mu(\mathcal{N}_A\otimes \mathrm{id}_E(\rho_{AE}))-\mu(\rho_{AE})\right)$$ are channel resource monotones satisfying (R1-2).
\end{result}

\begin{proof}
  We start with $\mc{R}_g(\mc{N})$. Nonnegativity (R1) follows from the nonnegativity of $\mu$. For $\mc{M}\in\mathrm{RNG}$, we have $\mc{M}\otimes \mathrm{id}(\omega)\in\Omega$ and thus $\mc{R}_g(\mc{M})=0$. For monotonicity (R2), we have
  \begin{equation}\label{Rgmonotonicity}
    \begin{aligned}
    \mc{R}_g(\mc{N})&=\max_{\omega_{AE}\in\Omega_{AE}}\mu(\mc{N}_A\otimes \mathrm{id}_E(\omega_{AE}))\\
    &\geq \max_{\omega_{AE'E}\in\Omega_{AE'E}}\mu(\mc{N}_A\otimes \mathrm{id}_{E'}\otimes \mathrm{id}_E(\omega_{AE'E}))\\
    &\geq \max_{\omega_{AE'E}\in\Omega_{AE'E}}\mu((\mc{N}_A\otimes \mathrm{id}_{E'})\circ \phi_1\otimes \mathrm{id}_E(\omega_{AE'E}))\\
    &\geq \max_{\omega_{AE'E}\in\Omega_{AE'E}}\mu(\phi_2\circ (\mc{N}_A\otimes \mathrm{id}_{E'})\circ \phi_1\otimes \mathrm{id}_E(\omega_{AE'E}))\\
    &=\mc{R}_g(\phi_2\circ(\mc{N}\otimes \mathrm{id})\circ\phi_1).
    \end{aligned}
  \end{equation}
  The above equation follows a similar logic as before. The second line follows by appending another ancillary system which is less optimal. The third line follows from the fact that $\phi\otimes \mathrm{id}(\omega)\in\Omega$. The fourth line follows from the monotonicity of $\mu$.

  Now we consider $\mc{R}_b(\mc{N})$. For any quantum channel $\mc{N}$, we have
  \begin{equation}
    \begin{aligned}
    \mc{R}_b(\mc{N})&=\max_{\rho_{AE}\in\mathcal{D}(\mathcal{H}_{AE})}\left(\mu(\mathcal{N}_A\otimes \mathrm{id}_E(\rho_{AE}))-\mu(\rho_{AE})\right)\\
    &\geq \max_{\omega_{AE}\in\Omega_{AE}}\left(\mu(\mathcal{N}_A\otimes \mathrm{id}_E(\omega_{AE}))-\mu(\omega_{AE})\right)\\
    &=\mc{R}_g(\mc{N})\\
    &\geq 0.
    \end{aligned}
  \end{equation}
  On the other hand, for any $\mc{M}\in\mathrm{RNG}$,
    \begin{equation}
    \begin{aligned}
    \mc{R}_b(\mc{M})&=\max_{\rho_{AE}\in\mathcal{D}(\mathcal{H}_{AE})}\left(\mu(\mathcal{M}_A\otimes \mathrm{id}_E(\rho_{AE}))-\mu(\rho_{AE})\right)\\
    &\leq \max_{\rho_{AE}\in\mathcal{D}(\mathcal{H}_{AE})}\left(\mu(\rho_{AE})-\mu(\rho_{AE})\right)\\
    &=0,
    \end{aligned}
  \end{equation}
therefore $\mc{R}_b(\mc{M})=0$. For monotonicity (R2), we have
\begin{equation}
  \begin{aligned}
  \mc{R}_b(\mc{N})&=\max_{\rho_{AE}\in\mathcal{D}(\mathcal{H}_{AE})}\left(\mu(\mathcal{N}_A\otimes \mathrm{id}_E(\rho_{AE}))-\mu(\rho_{AE})\right)\\
  &\geq \max_{\rho_{AE'E}\in\mathcal{D}(\mathcal{H}_{AE'E})}\left(\mu(\mathcal{N}_A\otimes \mathrm{id}_{E'}\otimes \mathrm{id}_E(\rho_{AE'E}))-\mu(\rho_{AE'E})\right)\\
  &\geq \max_{\rho_{AE'E}\in\mathcal{D}(\mathcal{H}_{AE'E})}\left(\mu((\mathcal{N}_A\otimes \mathrm{id}_{E'})\circ\phi_1\otimes \mathrm{id}_E(\rho_{AE'E}))-\mu(\phi_1\otimes \mathrm{id}_E(\rho_{AE'E}))\right)\\
  &\geq \max_{\rho_{AE'E}\in\mathcal{D}(\mathcal{H}_{AE'E})}\left(\mu((\mathcal{N}_A\otimes \mathrm{id}_{E'})\circ\phi_1\otimes \mathrm{id}_E(\rho_{AE'E}))-\mu(\rho_{AE'E})\right)\\
  &\geq \max_{\rho_{AE'E}\in\mathcal{D}(\mathcal{H}_{AE'E})}\left(\mu(\phi_2\circ(\mathcal{N}_A\otimes \mathrm{id}_{E'})\circ\phi_1\otimes \mathrm{id}_E(\rho_{AE'E}))-\mu(\rho_{AE'E})\right)\\
  &=\mc{R}_b(\phi_2\circ(\mc{N}\otimes \mathrm{id})\circ\phi_1).
  \end{aligned}
\end{equation}
Here, the second line follows by appending another ancillary system which is less optimal. The third line follows from the fact that the image space of $\phi_1\otimes \mathrm{id}_E$ is smaller than all states. The fourth and fifth line both uses monotonicity of $\mu$.
\end{proof}

\section{Interplay with State Resource Theories}
We present the interplay between channel and state resource theories by showing the general relationship between distance-based resource monotones and resource generating powers. Besides Results~\ref{Rgupperbound} and~\ref{Rbupperbound} which correspond to Theorem~2 in the main text, we also present a lower bound of a different form of $\mc R_g(\mc N)$ in Remark~\ref{Rglowerbound}.

\begin{result}\label{Rgupperbound}
For a state resource theory $\mathbf{S}=(\Omega, \Phi,\mu)$ and its corresponding channel resource theory $\mathbf{C}=(\mathrm{RNG},\mc{O},\mc{R})$ as constructed in the main text, we have
 \begin{equation}
 	\mc{R}_g(\mc{N})\leq \mathcal{R}_{\Omega}(\mathcal{N}).
 \end{equation}

\end{result}

\begin{proof}
  With the definition of $\mc{R}_\Omega(\mc{N})$, we have
\begin{equation}
    \begin{aligned}
    \mc{R}_\Omega(\mc{N})&=\min_{\mc{M}\in\mathrm{RNG}}\max_{\omega_{AE}\in\Omega_{AE}}D(\mc{N}_A\otimes \mathrm{id}_E(\omega_{AE}),\mc{M}_A\otimes \mathrm{id}_E(\omega_{AE}))\\
    &\geq \min_{\mc{M}\in\mathrm{RNG}}\max_{\omega_{AE}\in\Omega_{AE}}\min_{\sigma\in\Omega_{AE}}D(\mc{N}_A\otimes \mathrm{id}_E(\omega_{AE}),\sigma)\\
    &=\max_{\omega_{AE}\in\Omega_{AE}}\min_{\sigma\in\Omega_{AE}}D(\mc{N}_A\otimes \mathrm{id}_E(\omega_{AE}),\sigma)\\
    &=\max_{\omega_{AE}\in\Omega_{AE}}\mu\left(\mc{N}_A\otimes \mathrm{id}_E(\omega_{AE})\right)\\
    &=\mc{R}_g(\mc{N}).
    \end{aligned}
\end{equation}
Here, the second line follows by adding an additional minimization, and the third line is because $\mc{M}$ does not appear in the objective. The fourth line follows from the definition of $\mu$.

\end{proof}

\begin{result}\label{Rbupperbound}
For a state resource theory $\mathbf{S}=(\Omega, \Phi,\mu)$ and its corresponding channel resource theory $\mathbf{C}=(\mathrm{RNG},\mc{O},\mc{R})$, when the distance quantifier of states $D$ in the definition of $\mu$ satisfies the triangle inequality, we have $$\mc{R}_b(\mc{N})\leq {\mc{R}}(\mc{N}).$$
\end{result}

\begin{proof}
  For the distance quantifier $D$, we assume that it satisfies triangle inequality, which is $D(\rho,\sigma)\leq D(\rho,\delta)+D(\delta,\sigma)$ for all states $\rho,\sigma,\delta$. Then, it follows that
  \begin{equation}\label{triangle}
    \begin{aligned}
    \mc{R}(\mc{N})&=\min_{\mc{M}\in\mathrm{RNG}}\max_{\rho_{AE}\in\mc{D}(\mc{H}_{AE})}D(\mc{N}_A\otimes \mathrm{id}_E(\rho_{AE}),\mc{M}_A\otimes \mathrm{id}_E(\rho_{AE}))\\
    &\geq \min_{\mc{M}\in\mathrm{RNG}}\max_{\rho_{AE}\in\mc{D}(\mc{H}_{AE})}\left(D(\mc{N}_A\otimes \mathrm{id}_E(\rho_{AE}),\delta)-D(\mc{M}_A\otimes \mathrm{id}_E(\rho_{AE}),\delta)\right)\\
    &\geq \max_{\rho_{AE}\in\mc{D}(\mc{H}_{AE})}\min_{\mc{M}\in\mathrm{RNG}}\left(D(\mc{N}_A\otimes \mathrm{id}_E(\rho_{AE}),\delta)-D(\mc{M}_A\otimes \mathrm{id}_E(\rho_{AE}),\delta)\right).
    \end{aligned}
  \end{equation}
Here, the second line follows from triangle inequality and the third line follows from minimax theorem. Note that Eq.~\eqref{triangle} holds for any $\delta$, and thus also holds for the choice $$\delta^*=\arg\min_{\omega\in\Omega_{AE}}D(\mc{M}_A\otimes \mathrm{id}_E(\rho_{AE}),\omega).$$
Then,
  \begin{equation}
    \begin{aligned}
    \mc{R}(\mc{N})&\geq \max_{\rho_{AE}\in\mc{D}(\mc{H}_{AE})}\min_{\mc{M}\in\mathrm{RNG}}\left(D(\mc{N}_A\otimes \mathrm{id}_E(\rho_{AE}),\delta^*)-D(\mc{M}_A\otimes \mathrm{id}_E(\rho_{AE}),\delta^*)\right)\\
    &=\max_{\rho_{AE}\in\mc{D}(\mc{H}_{AE})}\min_{\mc{M}\in\mathrm{RNG}}\left(D(\mc{N}_A\otimes \mathrm{id}_E(\rho_{AE}),\delta^*)-\mu(\mc{M}_A\otimes \mathrm{id}_E(\rho_{AE}))\right)\\
    &\geq\max_{\rho_{AE}\in\mc{D}(\mc{H}_{AE})}\min_{\mc{M}\in\mathrm{RNG}}\left(D(\mc{N}_A\otimes \mathrm{id}_E(\rho_{AE}),\delta^*)-\mu(\rho_{AE})\right)\\
    &\geq\max_{\rho_{AE}\in\mc{D}(\mc{H}_{AE})}\left(\min_{\omega\in\Omega_{AE}}D(\mc{N}_A\otimes \mathrm{id}_E(\rho_{AE}),\omega)-\mu(\rho_{AE})\right)\\
    &=\max_{\rho_{AE}\in\mc{D}(\mc{H}_{AE})}\left(\mu(\mc{N}_A\otimes \mathrm{id}_E(\rho_{AE}))-\mu(\rho_{AE})\right)\\
    &=\mc{R}_b(\mc{N}).
    \end{aligned}
  \end{equation}
Here, the second line follows from the definition of $\mu$. The third line follows from monotonicity of $\mu$. The fourth line replaces the minimization of $\mc{M}$ to a larger set (note that $\delta^*\in\Omega_{AE}$). The fifth line also follows from the definition of $\mu$.
\end{proof}

\begin{remark}\label{Rglowerbound}
We can also characterize $\mc{R}_g(\mc{N})$ with the definition of resource destroying (RD) channels, which are quantum channels that map any state to a free state. For a $\lambda\in\mathrm{RD}$, we define $\mu_\lambda(\rho)=D(\rho,\lambda(\rho))$, which may not be a resource monotone as it can violate both requirements. We consider it to be a useful quantifier due to its closed-from expression and $\mu(\rho)=\min_{\lambda\in\mathrm{RD}}\mu_\lambda(\rho)$ (Recall the definition $\mu(\rho)=\min_{\omega\in\Omega}D(\rho,\omega)$).
In particular, we can upper bound  $\mc{R}_\Omega(\mc{N})$ as follows,

 \begin{equation}
 	 \mathcal{R}_{\Omega}(\mathcal{N})\leq\max_{\omega_{AE}\in\Omega_{AE}}\mu_{\lambda_A}(\mc{N}_A\otimes \mathrm{id}_E(\omega_{AE})),
 \end{equation}
 where $\mu_{\lambda_A}(\mc{N}_A\otimes \mathrm{id}_E(\omega_{AE})) = D(\mc{N}_A\otimes \mathrm{id}_E(\omega_{AE}), \lambda_A(\mc{N}_A\otimes \mathrm{id}_E(\omega_{AE})))$. Note that here the resource destroying channel $\lambda$ only acts on system $A$. The proof is as follows:

  Note that the channel $\mathcal{E}^*=\lambda_A\circ\mathcal{N}_A$ for any resource destroying channel $\lambda_A$ is a resource non-generating channel. Then we have
\begin{equation}
    \begin{aligned}
    \mc{R}_\Omega(\mc{N})&=\min_{\mc{M}\in\mathrm{RNG}}\max_{\omega_{AE}\in\Omega_{AE}}D(\mc{N}_A\otimes \mathrm{id}_E(\omega_{AE}),\mc{M}_A\otimes \mathrm{id}_E(\omega_{AE}))\\
    &\leq\max_{\omega_{AE}\in\Omega_{AE}}D(\mc{N}_A\otimes \mathrm{id}_E(\omega_{AE}),\lambda_A\circ\mathcal{N}_A\otimes \mathrm{id}_E(\omega_{AE}))\\
    &=\max_{\omega_{AE}\in\Omega_{AE}}D(\mc{N}_A\otimes \mathrm{id}_E(\omega_{AE}),\lambda_A\circ(\mathcal{N}_A\otimes \mathrm{id}_E)(\omega_{AE}))\\
    &=\max_{\omega_{AE}\in\Omega_{AE}}\mu_{\lambda_A}(\mc{N}_A\otimes \mathrm{id}_E(\omega_{AE})).
    \end{aligned}
\end{equation}

For resource theories with free states admitting a product structure, i.e., $\omega_{AE}=\sum_ip_i\omega_{A,i}\otimes\omega_{E,i},\forall \omega_{AE}\in \Omega_{AE}$, we have 
\begin{equation}
    \mu_{\lambda_A}(\mc{N}_A\otimes \mathrm{id}_E(\omega_{AE})) =\mu_{\lambda}(\mc{N}_A\otimes \mathrm{id}_E(\omega_{AE})).
\end{equation}
Furthermore, when the state resource theory has a resource destroying channel $\lambda$ that satisfies $\mu(\rho) = D(\rho,\lambda(\rho))$, the equal signs are achieved with $\mc{R}_g(\mc{N})= \mathcal{R}_{\Omega}(\mathcal{N})$. 
\end{remark}

\section{Channel Distillation and Dilution}
We first prove the general relationship between parallel and iterative protocols (Lemma~\ref{paralleliterative}) and then study the universal role of asymptotic resource measures in these tasks (Result~\ref{asympbound}, Theorem~3). Following an Assumption on the optimal resource channels, we prove bounds on the distillation (Results~\ref{standarddistillbound} and~\ref{generaldistillbound}, Theorem~4) and dilution rates (Result~\ref{dilutionandlowerbound}, Theorem~5) using resource generating powers.

\begin{figure}[ht]
    \centering
    \includegraphics[width=10cm]{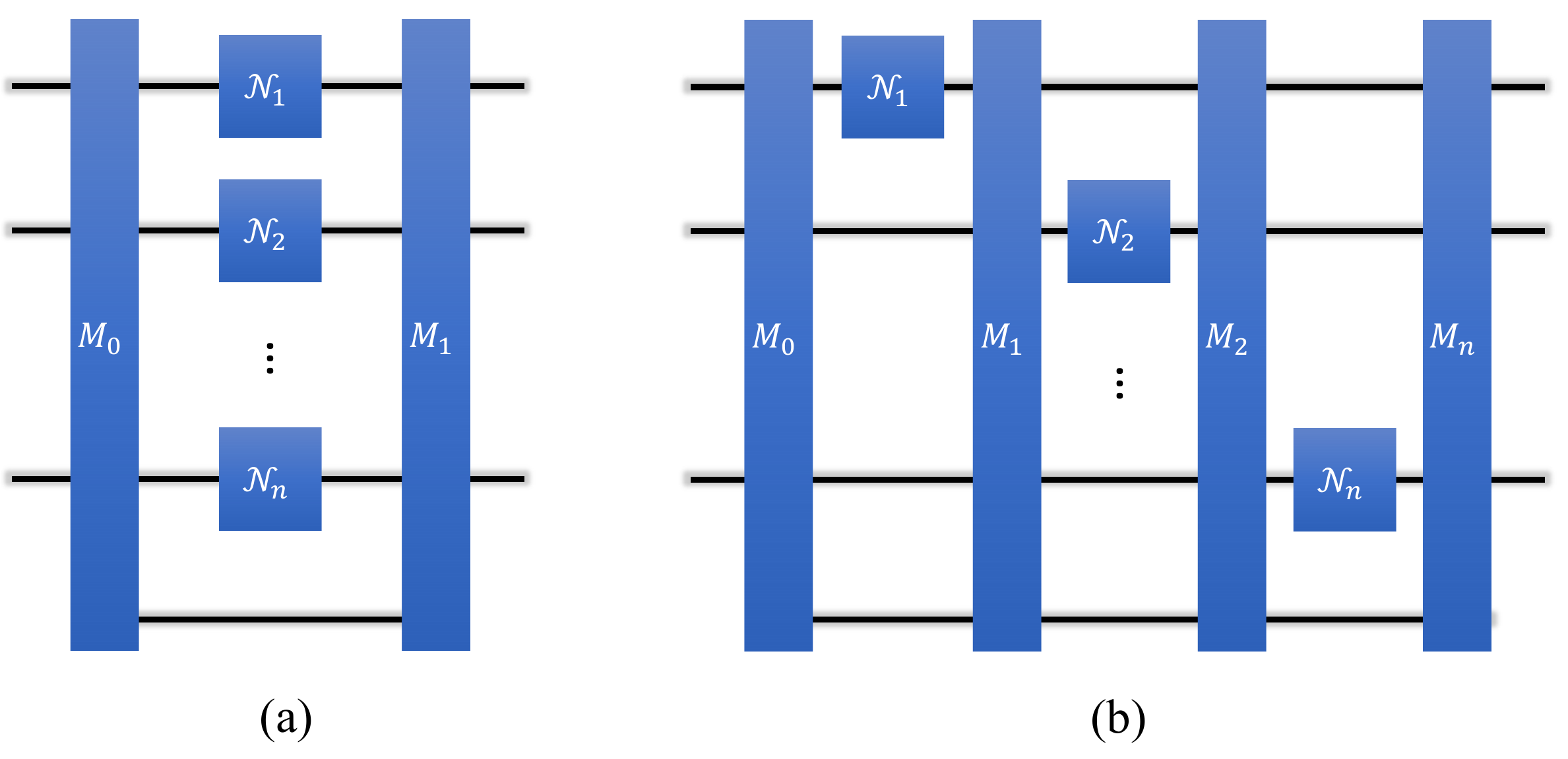}
    \caption{Different schemes for quantum channel distillation and dilution. (a)The parallel protocol, where quantum channels are concatenated in parallel. (b)The iterative protocol, where quantum channels are manipulated sequentially.}\label{figuredistill}
  \end{figure}

\begin{lemma}\label{paralleliterative}
The iterative distillation/dilution protocol contains the parallel distillation/dilution protocol, that is, $\mc{R}_{\text{distill}}(\mc{N})\leq \tilde{\mc{R}}_{\text{distill}}(\mc{N})\le\tilde{\mc{R}}_{\text{dilute}}(\mc{N})\le \mc{R}_{\text{dilute}}(\mc{N})$.
\end{lemma}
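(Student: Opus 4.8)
The plan is to treat the four-term chain as three separate inequalities: the two outer ones, $\mc{R}_{\text{distill}}(\mc{N})\le\tilde{\mc{R}}_{\text{distill}}(\mc{N})$ and $\tilde{\mc{R}}_{\text{dilute}}(\mc{N})\le\mc{R}_{\text{dilute}}(\mc{N})$, I would establish by showing that every parallel protocol is literally a special case of an iterative one; the middle inequality $\tilde{\mc{R}}_{\text{distill}}(\mc{N})\le\tilde{\mc{R}}_{\text{dilute}}(\mc{N})$ I would establish as a weak-converse ("no free lunch") statement using a resource monotone.

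For the outer inequalities, the key observation is that a parallel use $\mc{N}^{\otimes n}$ can be rewritten as a composition of single-copy uses. Assuming the input and output spaces of $\mc{N}$ coincide (otherwise one pads with a fixed free ancilla so that composition is defined), I would write
\begin{equation}
\mc{N}^{\otimes n}=M_n\circ M_{n-1}\circ\cdots\circ M_1,\qquad M_i=\mathrm{id}^{\otimes(i-1)}\otimes\mc{N}\otimes\mathrm{id}^{\otimes(n-i)},
\end{equation}
where the factors commute because they act on disjoint tensor factors. Each $M_i$ is the image of $\mc{N}$ under the free super-operation that tensors $\mc{N}$ with identities and permutes it into the $i$-th slot. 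Given a parallel distillation protocol $\Lambda\in\mc{O}$ with $\|\Lambda(\mc{N}^{\otimes n})-\mc{G}^{\otimes nR}\|_{\diamond}\le\varepsilon$, I would absorb the pre- and post-processing free channels of $\Lambda$ into the first and last factors, producing free super-operations $\Lambda_1,\dots,\Lambda_n\in\mc{O}$ with $\Lambda_1(\mc{N})\circ\cdots\circ\Lambda_n(\mc{N})=\Lambda(\mc{N}^{\otimes n})$ exactly. The same error bound then certifies $R$ as an iteratively achievable distillation rate, so the iterative achievable-rate set contains the parallel one and the maximum can only grow, giving the first inequality. The identical decomposition applied to $\mc{G}^{\otimes nR}$ shows that every parallel dilution protocol is iterative, so the iterative minimum can only shrink, giving the third inequality.

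For the middle inequality I would argue by a weak converse. I would invoke a resource monotone $Q$ that is normalized on $\mathfrak{G}$ (so $Q(\mc{G})=1$), additive over tensor products, subadditive under composition ($Q(\mc{A}\circ\mc{B})\le Q(\mc{A})+Q(\mc{B})$), monotone under free super-operations, and asymptotically continuous. Evaluating $Q$ on a rate-$R_d$ iterative distillation protocol and combining composition-subadditivity with single-copy monotonicity gives $nR_d\lesssim Q(\mc{N}^{\otimes n})=nQ(\mc{N})$, hence $\tilde{\mc{R}}_{\text{distill}}(\mc{N})\le Q(\mc{N})$; the same estimate on a rate-$R_u$ iterative dilution protocol gives $nQ(\mc{N})=Q(\mc{N}^{\otimes n})\lesssim nR_u$, hence $Q(\mc{N})\le\tilde{\mc{R}}_{\text{dilute}}(\mc{N})$, with the continuity corrections vanishing in the $\varepsilon\to0^+$, $n\to\infty$ limits.

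The main obstacle is precisely this middle inequality. The outer inequalities are essentially bookkeeping, since the embedding reproduces the channel exactly. By contrast, the tempting route of chaining a dilution protocol into a distillation protocol fails: both protocols emit channels in tensor-product form, which, as noted in the remark following Definition 2, can only be reused in parallel, so the chained distillation is forced back into the parallel regime and yields only the weaker bound $\mc{R}_{\text{distill}}(\mc{N})\le\tilde{\mc{R}}_{\text{dilute}}(\mc{N})$. The substantive point is therefore to control a monotone simultaneously under tensor products (to match the $\mc{N}^{\otimes n}$ and $\mc{G}^{\otimes nR}$ endpoints) and under composition (to match the iterative protocol structure); verifying that the diamond-norm–based monotone of Eq.~\eqref{Eq:channelmonotone1}, or another suitable candidate, possesses this composition-subadditivity is where the real work lies.
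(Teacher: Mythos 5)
Your handling of the two outer inequalities is correct and is essentially the paper's own argument: the paper also proves them by observing that every parallel protocol is a special case of an iterative one, obtained by choosing the intermediate super-operations trivial (its construction with $M_1,\dots,M_{n-1}$ set to the identity), which matches your decomposition $\mc{N}^{\otimes n}=M_n\circ\cdots\circ M_1$ with the pre- and post-processing of $\Lambda$ absorbed into the end factors. Both you and the paper share the same mild implicit assumptions here (identity/permutation maps count as free, so that each $M_i$ is of the form $\phi_1\circ(\mc{N}\otimes \mathrm{id})\circ\phi_2$).

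The genuine gap is the middle inequality $\tilde{\mc{R}}_{\text{distill}}(\mc{N})\le\tilde{\mc{R}}_{\text{dilute}}(\mc{N})$. You correctly recognize that it does not follow from the embedding and requires a weak-converse argument, but your proof of it is conditional on a monotone $Q$ that is normalized on $\mathfrak{G}$, tensor-additive, composition-subadditive, monotone under $\mc{O}$, and asymptotically continuous --- and you never exhibit such a $Q$; in the generality in which the Lemma is stated (arbitrary $\mathfrak{G}$, arbitrary channel resource theory) nothing guarantees one exists, so as written this inequality remains unproven. For the record, the paper's own proof of the Lemma is no better on this point: it derives only the outer inequalities from the embedding and passes the middle one off as holding ``by definition.'' Its actual justification appears only in the specialized setting $\mathfrak{G}=\{\mc{G}_\star\}$ of Theorems~4 and~5, where $\mc{R}_b$ plays exactly the role of your $Q$, giving $\tilde{\mc{R}}_{\text{distill}}\le\mc{R}_b\le\tilde{\mc{R}}_{\text{dilute}}$; and the technique there is not abstract composition-subadditivity but a telescoping sum of the state monotone $\mu$ along the iterative protocol, bounding each increment $\mu(\rho_{k+1})-\mu(\rho_k)$ by $\mc{R}_b(\mc{N})$. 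If you want to complete your route, note that under the paper's assumptions $\mc{R}_b$ does have your properties: composition-subadditivity follows from the same telescoping identity $\mu\big(\mc{A}\circ\mc{B}\otimes\mathrm{id}(\rho)\big)-\mu(\rho)=\big[\mu\big(\mc{A}\otimes\mathrm{id}(\sigma)\big)-\mu(\sigma)\big]+\big[\mu(\sigma)-\mu(\rho)\big]$ with $\sigma=\mc{B}\otimes\mathrm{id}(\rho)$, superadditivity over tensor products is Lemma~3 of the supplement, and monotonicity is Result~3 --- but this closes the gap only under those extra assumptions (additive $\mu$, and (R3), (R5) for $\mc{R}_b$), not for the Lemma at the level of generality at which it is stated.
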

\begin{proof}
  We prove this Lemma by showing a simple construction. For an arbitrary parallel protocol in Fig.~\ref{figuredistill}(a), we construct the iterative protocol in Fig.~\ref{figuredistill}(b). We require that the dimensions between $M_0,M_1,\cdots,M_n$ are large enough such that each quantum channel $\mc{N}_i$ can act on different subsystems. Now, we can choose $M_1,\cdots,M_{n-1}$ to be identity, and the result is the parallel protocol in Fig.~\ref{figuredistill}(a). In other words, every parallel protocol is also an iterative protocol, so by definition we have the inequalities.

\end{proof}

\begin{result}\label{asympbound}
For any asymptotic resource measure $\mc{R}_{\mathrm{asymp}}$ for a channel resource theory and for any quantum channel $\mc{N}$, we have
  \begin{equation}
  	\begin{aligned}
  		\mc{R}_{\mathrm{distill}}(\mc{N})&\leq \mc{R}_{\mathrm{asymp}}(\mc{N})\leq \mc{R}_{\mathrm{dilute}}(\mc{N}).
  	\end{aligned}
  \end{equation}
\end{result}
\begin{proof}
  For the left hand side, suppose we have a parallel distillation protocol such that for a large $n$, $$\|\phi_1\circ(\mc{N}^{\otimes n}\otimes \mathrm{id})\circ\phi_2-\mathcal{G}^{\otimes nR}\|_\diamond\leq\varepsilon.$$ Then,
  \begin{equation}
    \begin{aligned}
    \mc{R}_{\mathrm{asymp}}(\mc{N})&=\frac{1}{n}\mc{R}_{\mathrm{asymp}}\left(\mc{N}^{\otimes n}\right)\\
    &\geq \frac{1}{n}\mc{R}_{\mathrm{asymp}}\left(\phi_1\circ(\mc{N}^{\otimes n}\otimes \mathrm{id})\circ\phi_2\right)\\
    &\geq \frac{1}{n}\left(\mc{R}_{\mathrm{asymp}}\left(\mathcal{G}^{\otimes nR}\right)-O(n\varepsilon)\right)\\
    &=R-O(\varepsilon).
    \end{aligned}
  \end{equation}

Here, the first line follows from additivity (R4). The second line follows from monotonicity (R2). The third line follows from continuity (R5). Taking the limit of $\varepsilon\to 0^+$ and $n\to\infty$, we have $\mc{R}_{\mathrm{asymp}}(\mc{N})\geq R$. Since this holds for any distillation protocol, we conclude that $\mc{R}_{\mathrm{distill}}(\mc{N})\leq \mc{R}_{\mathrm{asymp}}(\mc{N})$.

For the right hand side, suppose we have a parallel dilution protocol such that for a large $n$, $$\|\phi_1\circ(\mathcal{G}^{\otimes nR}\otimes \mathrm{id})\circ\phi_2-\mc{N}^{\otimes n}\|_\diamond\leq\varepsilon.$$ Then,
  \begin{equation}
    \begin{aligned}
    \mc{R}_{\mathrm{asymp}}(\mc{N})&=\frac{1}{n}\mc{R}_{\mathrm{asymp}}\left(\mc{N}^{\otimes n}\right)\\
    &\leq \frac{1}{n}\left(\mc{R}_{\mathrm{asymp}}\left(\phi_1\circ(\mathcal{G}^{\otimes nR}\otimes \mathrm{id})\circ\phi_2\right)+O(n\varepsilon)\right)\\
    &\leq \frac{1}{n}\left(\mc{R}_{\mathrm{asymp}}\left(\mathcal{G}^{\otimes nR}\right)+O(n\varepsilon)\right)\\
    &=R+O(\varepsilon).
    \end{aligned}
  \end{equation}

Here, the first line follows from additivity (R4). The second line follows from continuity (R5). The third line follows from monotonicity (R2). Taking the limit of $\varepsilon\to 0^+$ and $n\to\infty$, we have $\mc{R}_{\mathrm{asymp}}(\mc{N})\leq R$. Since this holds for any dilution protocol, we conclude that $\mc{R}_{\mathrm{asymp}}(\mc{N})\leq \mc{R}_{\mathrm{dilute}}(\mc{N})$.
\end{proof}

\noindent\textbf{Assumptions.} In the following, we assume that the optimal resource channels $\mathfrak{G}$ are uniquely defined to be the constant channel, $\mathfrak{G}=\{\mc G_\star\}$, where $$\mc{G}_\star(\rho)=\rho_m$$ is the channel that constantly outputs the maximal unit resource state $\rho_m$.

% \begin{enumerate}
%   \item All channels in $\mathfrak{G}$ are equivalent: $\forall\mc{G}_1,\mc{G}_2\in\mathfrak{G}$, $\exists\phi_1,\phi_2\in\Phi$ such that $\mc{G}_2=\phi_1\circ\left(\mc{G}_1\otimes \mathrm{id}\right)\circ\phi_2$.
%   \item The channel $\mc{G}_\star(\rho)=\rho_m$ that constantly outputs the maximal unit resource state $\rho_m$ belongs to $\mathfrak{G}$.
% \end{enumerate}

\noindent\textbf{Properties of state resource measures.} In analogy of the requirements for the asymptotic resource measures of channels, we also give additional requirements for the state resource measure $\mu$ as well.
\begin{itemize}
    \item (S3) Normalization: $\mu(\rho_m)=1$ for all maximal unit resource state $\rho_m$;
    \item (S4) Additivity: $\mu(\rho\otimes\sigma)=\mu(\rho)+\mu(\sigma)$;
    \item (S4') Subadditivity: $\mu(\rho\otimes\sigma)\leq\mu(\rho)+\mu(\sigma)$;
    \item (S5) Continuity: $|\mu(\rho)-\mu(\sigma)|\leq O(f(\varepsilon)\log d)$ with $\lim_{\varepsilon\to 0^+}f(\varepsilon)=0$, when $\|\rho-\sigma\|_1\leq\varepsilon$.
\end{itemize}

Here $d$ denotes the dimension of the Hilbert space of the states. Now, we note the following properties for the channel monotones $\mc{R}_g$ and $\mc{R}_b$.

\begin{lemma}\label{superadditive}
  Recall the definitions \begin{equation}
	\begin{aligned}
		\mc{R}_g(\mc{N})=&\max_{\omega_{AE}\in\Omega_{AE}}\mu(\mc{N}_A\otimes \mathrm{id}_E(\omega_{AE})),\\
		\mc{R}_b(\mc{N})=&\max_{\rho_{AE}\in\mathcal{D}(\mathcal{H}_{AE})}\left(\mu(\mathcal{N}_A\otimes \mathrm{id}_E(\rho_{AE}))-\mu(\rho_{AE})\right),
	\end{aligned}
\end{equation}
when $\mu$ is additive, we conclude that $\mc{R}_g$ and $\mc{R}_b$ is actually superadditive, i.e., $\mc{R}_g(\mc{N}\otimes\mc{M})\geq \mc{R}_g(\mc{N})+\mc{R}_g(\mc{M})$, $\mc{R}_b(\mc{N}\otimes\mc{M})\geq \mc{R}_b(\mc{N})+\mc{R}_b(\mc{M})$ for all quantum channels $\mc{N},\mc{M}$.
\end{lemma}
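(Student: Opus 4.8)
The plan is to prove both superadditivity statements by the same mechanism: the maximization defining each monotone ranges over all admissible input states, so restricting to product inputs produces a valid lower bound, and additivity of $\mu$ then makes this bound factorize exactly into the sum of the individual quantities. Superadditivity (rather than equality) is the natural output precisely because a joint maximization over the composite system can only do better than optimizing the two factors independently.

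First I would fix optimizers for the single-channel quantities. For $\mc{R}_g$, let $\omega^*_{AE_1}\in\Omega_{AE_1}$ attain $\mc{R}_g(\mc{N})=\mu(\mc{N}_A\otimes\mathrm{id}_{E_1}(\omega^*_{AE_1}))$, and let $\sigma^*_{BE_2}\in\Omega_{BE_2}$ attain $\mc{R}_g(\mc{M})$. I would then feed the product state $\omega^*_{AE_1}\otimes\sigma^*_{BE_2}$ into $\mc{N}\otimes\mc{M}$, treating $E_1E_2$ as the ancilla of the composite channel. Because $\mc{N}$ acts only on $A$ and $\mc{M}$ only on $B$, the output factorizes as $(\mc{N}_A\otimes\mathrm{id}_{E_1})(\omega^*_{AE_1})\otimes(\mc{M}_B\otimes\mathrm{id}_{E_2})(\sigma^*_{BE_2})$, and additivity (S4) of $\mu$ splits $\mu$ of this product into $\mc{R}_g(\mc{N})+\mc{R}_g(\mc{M})$. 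Since $\mc{R}_g(\mc{N}\otimes\mc{M})$ is a maximum that includes this particular product input, the inequality $\mc{R}_g(\mc{N}\otimes\mc{M})\geq\mc{R}_g(\mc{N})+\mc{R}_g(\mc{M})$ follows.

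For $\mc{R}_b$ I would repeat the argument with general (not necessarily free) optimizers $\rho^*_{AE_1}$ and $\tau^*_{BE_2}$, again using the product test state $\rho^*_{AE_1}\otimes\tau^*_{BE_2}$. Applying additivity of $\mu$ twice, once to the output and once to the input, splits $\mu(\text{output})-\mu(\text{input})$ into $[\mu(\mc{N}_A\otimes\mathrm{id}(\rho^*))-\mu(\rho^*)]+[\mu(\mc{M}_B\otimes\mathrm{id}(\tau^*))-\mu(\tau^*)]$, which is exactly $\mc{R}_b(\mc{N})+\mc{R}_b(\mc{M})$; the maximization defining $\mc{R}_b(\mc{N}\otimes\mc{M})$ again dominates this value, giving the claimed inequality.

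The only point requiring care is the admissibility of the product input in the $\mc{R}_g$ case: I need the free set to be closed under tensor products, i.e. $\omega^*_{AE_1}\otimes\sigma^*_{BE_2}\in\Omega_{(AB)(E_1E_2)}$, which holds under the assumed tensor-product structure of the state resource theory $\mathbf{S}$. For $\mc{R}_b$ this is automatic since the maximization is over all density operators. I do not expect any genuine obstacle; the entire content of the lemma is that additivity of $\mu$ upgrades to superadditivity of the induced channel monotones through product test states, with the one-directional inequality arising simply because the composite optimization need not restrict to product inputs.
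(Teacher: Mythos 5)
Your proposal is correct and follows essentially the same route as the paper's proof: both lower-bound the composite maximization by restricting to product test states (the product of the single-channel optimizers, with ancillae $E_1E_2$) and then invoke additivity of $\mu$ to split the resulting value into $\mc{R}_g(\mc{N})+\mc{R}_g(\mc{M})$ (resp.\ $\mc{R}_b(\mc{N})+\mc{R}_b(\mc{M})$). The admissibility point you flag---that products of free states must be free---is the same implicit assumption the paper uses when it restricts to $\omega_{AE}\otimes\delta_{BE'}\in\Omega_{ABEE'}$, so there is no substantive difference.
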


\begin{proof}
  We start with $\mc{R}_g(\mc{N})$. For arbitrary quantum channels $\mc{N},\mc{M}$,
  \begin{equation}
      \begin{aligned}
        \mc{R}_g(\mc{N}\otimes\mc{M})&=\max_{\omega_{ABE}\in\Omega_{ABE}}\mu(\mc{N}_A\otimes\mc{M}_B\otimes \mathrm{id}_E(\omega_{ABE}))\\
        &\geq \max_{\omega_{ABEE'}\in\Omega_{ABEE'}}\mu(\mc{N}_A\otimes\mc{M}_B\otimes \mathrm{id}_E\otimes \mathrm{id}_{E'}(\omega_{ABEE'}))\\
        &\geq \max_{\omega_{AE}\otimes\delta_{BE'}\in\Omega_{ABEE'}}\mu(\mc{N}_A\otimes\mc{M}_B\otimes \mathrm{id}_E\otimes \mathrm{id}_{E'}(\omega_{AE}\otimes\delta_{BE'}))\\
        &=\max_{\omega_{AE}\otimes\delta_{BE'}\in\Omega_{ABEE'}}\mu\left(\mc{N}_A\otimes \mathrm{id}_E(\omega_{AE})\otimes \mc{M}_B\otimes\mathrm{id}_{E'}(\delta_{BE'})\right)\\
        &=\max_{\omega_{AE}\otimes\delta_{BE'}\in\Omega_{ABEE'}}\left(\mu\left(\mc{N}_A\otimes \mathrm{id}_E(\omega_{AE})\right)+\mu\left(\mc{M}_B\otimes\mathrm{id}_{E'}(\delta_{BE'})\right)\right)\\
        &=\mc{R}_g(\mc{N})+\mc{R}_g(\mc{M}).
      \end{aligned}
  \end{equation}
Here the second line follows by appending an additional ancillary system, and the third line follows by only choosing product states which is less optimal. The fifth line follows from the additivity of $\mu$.
 
Similarly, for $\mc{R}_b(\mc{N})$ we have
  \begin{equation}
      \begin{aligned}
        \mc{R}_b(\mc{N}\otimes\mc{M})&=\max_{\rho_{ABE}\in\mathcal{D}(\mathcal{H}_{ABE})}\left(\mu(\mathcal{N}_A\otimes\mc{M}_B\otimes \mathrm{id}_E(\rho_{ABE}))-\mu(\rho_{ABE})\right)\\
        &\geq \max_{\rho_{ABEE'}\in\mathcal{D}(\mathcal{H}_{ABEE'})}\left(\mu(\mathcal{N}_A\otimes\mc{M}_B\otimes \mathrm{id}_E\otimes\mathrm{id}_{E'}(\rho_{ABEE'}))-\mu(\rho_{ABEE'})\right)\\
        &\geq \max_{\rho_{AE}\otimes\sigma_{BE'}\in\mathcal{D}(\mathcal{H}_{ABEE'})}\left(\mu(\mc{N}_A\otimes \mathrm{id}_E(\rho_{AE})\otimes \mc{M}_B\otimes\mathrm{id}_{E'}(\rho_{BE'}))-\mu(\rho_{AE}\otimes\sigma_{BE'})\right)\\
        &= \max_{\rho_{AE}\otimes\sigma_{BE'}\in\mathcal{D}(\mathcal{H}_{ABEE'})}\left(\mu(\mc{N}_A\otimes \mathrm{id}_E(\rho_{AE}))+\mu( \mc{M}_B\otimes\mathrm{id}_{E'}(\rho_{BE'}))-\mu(\rho_{AE})-\mu(\sigma_{BE'})\right)\\
        &=\mc{R}_b(\mc{N})+\mc{R}_b(\mc{M}).
      \end{aligned}
  \end{equation}
Here the second line follows by appending an additional ancillary system, the third line follows by only choosing product states which is less optimal, and the fourth line follows from the additivity of $\mu$.
\end{proof}

In the following, let
\begin{equation}
    \mc{R}_g(\mc{N})=\max_{\omega_{AE}\in\Omega_{AE}}\mu_d(\mc{N}_A\otimes \mathrm{id}_E(\omega_{AE})),
\end{equation}
where $\mu_d$ is the asymptotic distillation rate of the state resource theory.

\begin{result}\label{standarddistillbound}
The parallel channel distillation rate is
$$\mc{R}_g(\mc{N})\leq\mc{R}_{\mathrm{distill}}(\mc{N})\leq \lim_{n\to\infty}\frac{1}{n}\mc{R}_g\left(\mc{N}^{\otimes n}\right),$$ when $\mc{R}_g$ satisfies (R3) and (R5). The equal sign is achieved if $\mc{R}_g$ also satisfies (R4').
\end{result}

\begin{proof}
  First we prove the lower bound $\mc{R}_{\mathrm{distill}}(\mc{N})\geq\mc{R}_g(\mc{N})$. Let $R=\mc{R}_g(\mc{N})$, then $\exists \omega\in\Omega_{AE}$ such that $$\mu_d(\mc{N}\otimes \mathrm{id}(\omega))=R.$$
  Under the assumptions, we only need to construct a distillation protocol that generates $\mc{G}_\star$. For a large $n$, we construct an asymptotic distillation protocol $\phi_1\circ\left(\mc{N}^{\otimes n}\otimes \mathrm{id}\right)\circ \phi_2\approx\mc{G}_\star^{\otimes nR}$. First of all, let $\phi_2$ be a quantum channel that constantly outputs $\omega^{\otimes n}$. The state before $\phi_1$ is $\sigma^{\otimes n}$ where $\sigma=\mc{N}\otimes \mathrm{id}(\omega)$. By the definition of $\mu_d$, there exists $\phi_1$ such that $\phi_1\left(\sigma^{\otimes n}\right)\approx \rho_m^{\otimes nR}$. The resulting channel $\phi_1\circ\left(\mc{N}^{\otimes n}\otimes \mathrm{id}\right)\circ \phi_2$ is close to $\mc{G}_\star^{\otimes nR}$ in terms of diamond norm. By definition of channel distillation rate, we have $\mc{R}_{\mathrm{distill}}(\mc{N})\geq\mc{R}_g(\mc{N})$.

  For the upper bound, suppose we have a parallel distillation protocol such that for a large $n$, $$\|\phi_1\circ(\mc{N}^{\otimes n}\otimes \mathrm{id})\circ\phi_2-\mathcal{G}^{\otimes nR}\|_\diamond\leq\varepsilon.$$ Then,
  \begin{equation}
    \begin{aligned}
 \frac{1}{n}\mc{R}_g\left(\mc{N}^{\otimes n}\right) &\geq \frac{1}{n}\mc{R}_g\left(\phi_1\circ(\mc{N}^{\otimes n}\otimes \mathrm{id})\circ\phi_2\right)\\
    &\geq \frac{1}{n}\left(\mc{R}_g\left(\mathcal{G}^{\otimes nR}\right)-O(n\varepsilon)\right)\\
    &=R-O(\varepsilon).
    \end{aligned}
  \end{equation}
Here, the first line follows from monotonicity (R2). The second line follows from continuity (R5). Taking the limit of $\varepsilon\to 0^+$ and $n\to\infty$, we have $\lim_{n\to\infty}\frac{1}{n}\mc{R}_g\left(\mc{N}^{\otimes n}\right)\geq R$. Since this holds for any distillation protocol, we conclude that $\lim_{n\to\infty}\frac{1}{n}\mc{R}_g\left(\mc{N}^{\otimes n}\right)\geq\mc{R}_{\mathrm{distill}}(\mc{N})$.

When $\mc{R}_g$ is subadditive, we have $\frac{1}{n}\mc{R}_g\left(\mc{N}^{\otimes n}\right)\leq \mc{R}_g\left(\mc{N}\right)$, thus the equal signs are achieved.
\end{proof}

\begin{result}\label{generaldistillbound}
When $\mc{R}_b$ satisfies (R3) and $\mu$ satisfies (S3) and (S5), the iterative channel distillation rate is
$\tilde{\mc{R}}_{\mathrm{distill}}(\mc{N})\le\mc{R}_b(\mc{N})$.
The equal sign is achieved when the state resource theory is asymptotically reversible and $\mu=\mu_d$ is the asymptotic distillation/dilution rate.
\end{result}
\begin{proof}
  We first prove the general bound $\tilde{\mc{R}}_{\mathrm{distill}}(\mc{N})\le\mc{R}_b(\mc{N})$. For a sufficiently large $n$, suppose there is a distillation protocol such that $$\|\phi_{n+1}\circ\left(\mc{N}\otimes \mathrm{id}_n\right)\circ \phi_n\circ\left(\mc{N}\otimes \mathrm{id}_{n-1}\right)\circ\cdots\circ\phi_1-\mc{G}^{\otimes nR}\|_\diamond\leq\varepsilon,$$then $\exists\omega\in\Omega$ such that $$\mu\left(\phi_{n+1}\circ\left(\mc{N}\otimes \mathrm{id}_n\right)\circ \phi_n\circ\left(\mc{N}\otimes \mathrm{id}_{n-1}\right)\circ\cdots\circ\phi_1(\omega)\right)\geq n(R-O(\varepsilon)).$$ Let $\rho_1=\phi_1(\omega)$ and $\rho_n=\phi_n\circ(\mc{N}\otimes \mathrm{id}_{n-1})(\rho_{n-1})$. Then
  \begin{equation*}
    \begin{aligned}
        n(R-O(\varepsilon)) &\leq \mu(\rho_{n+1})\\
        &=\sum_{k=1}^{n}\mu(\rho_{k+1})-\mu(\rho_{k})\\
        &=\sum_{k=1}^{n}\mu\left(\phi_{k+1}\circ(\mc{N}\otimes \mathrm{id}_{k})(\rho_{k})\right)-\mu(\rho_{k})\\
        &\leq \sum_{k=1}^{n}\mu\left(\mc{N}\otimes \mathrm{id}_{k}(\rho_{k})\right)-\mu(\rho_{k})\\
        &\leq \sum_{k=1}^{n}\max_{\rho\in\mathcal{D}(\mathcal{H}_{AE})}\left(\mu(\mathcal{N}\otimes \mathrm{id}(\rho))-\mu(\rho)\right)\\
        &=n\mc{R}_b(\mc{N}).
    \end{aligned}
  \end{equation*}
Here, the third line is by definition of $\rho_{k+1}$, the fourth line is by monotonicity of $\mu$, and the fifth line follows by choosing the optimal $\rho_k$. Taking the asymptotic limit, we have $\tilde{\mc{R}}_{\mathrm{distill}}(\mc{N})\leq \mc{R}_b(\mc{N})$.

When the state resource theory is asymptotically reversible, we can also prove a lower bound $\tilde{\mc{R}}_{\mathrm{distill}}(\mc{N})\geq \mc{R}_b(\mc{N})$, by constructing a distillation protocol that achieves $\mc{R}_b(\mc{N})$. The distillation protocol is described as follows.
   \begin{enumerate}
     \item For any input state, $\phi_1$ outputs a fixed free state $\omega^{\otimes t}$. Let $\sigma=\mc{N}(\omega)$. Through $t$ uses of $\mc{N}$, we obtain the state $\sigma^{\otimes t}$.
     \item We distill the maximal resource state from $\sigma^{\otimes t}$ with rate $\mu_d(\sigma)$ and obtain the state $\rho_m^{\otimes t\mu_d(\sigma)}$.
     \item We dilute the maximal resource state $\rho_m^{\otimes t\mu_d(\sigma)}$ to a state $\rho^{\otimes w}$ where $w=t\frac{\mu_d(\sigma)}{\mu_d(\rho)}$.
     \item Through $w$ usage of $\mc{N}$, we obtain the state $\tilde{\rho}^{\otimes w}$ where $\tilde{\rho}=\mc{N}\otimes \mathrm{id}(\rho)$.
     \item We distill $\tilde{\rho}^{\otimes w}$ into $\rho_m^{\otimes w\mu_d(\tilde{\rho})}$.
     \item We dilute $\rho_m^{\otimes w\mu_d(\rho)}$ into $\rho^{\otimes w}$, saving resource $\rho_m^{\otimes w\left(\mu_d(\tilde{\rho})-\mu_d(\rho)\right)}$ for output.
     \item Repeat Step 4-6 for $k$ times.
   \end{enumerate}
   Overall, the resource distillation rate is $\frac{kw\left(\mu_d(\tilde{\rho})-\mu_d(\rho)\right)}{kw+t}\to\left(\mu_d(\tilde{\rho})-\mu_d(\rho)\right)$ as $k\to\infty$. Since the protocol holds for any $\rho$, we can take the supremum of $\rho$ and obtain distillation rate $\mc{R}_b(\mc{N})$.
\end{proof}

\begin{result}\label{dilutionandlowerbound}
For any channel resource theory satisfying the Assumption, we have $\mc{R}_{\text{dilute}}(\mc{N})= \tilde{\mc{R}}_{\text{dilute}}(\mc{N})$. Furthermore, if $\mc{R}_b(\mc{N})$ satisfies (R3) and (R5) and $\mu$ satisfies (S3) and (S4), we also have $\mc{R}_{\text{dilute}}(\mc{N})= \tilde{\mc{R}}_{\text{dilute}}(\mc{N})\geq \mc{R}_b(\mc{N})$.
\end{result}
\begin{proof}
  Under the assumptions, we can convert any dilution protocol to a protocol that only uses the resource channel $\mc{G}_\star$ which outputs the maximal resource state from any input. As all parallel dilution protocols are contained in iterative dilution protocols, we only need to show that any iterative dilution protocol can reduce to a parallel one. Now, observe the following property of $\mc{G}_\star$,
  \begin{equation}\label{Gstarproperty}
    \left(\mc{G}_{\star A}\otimes \mathrm{id}_B\right)\circ\phi_{AB}(\rho_{AB})=\rho_m\otimes\phi_B'(\rho_B)=(\mathrm{id}_A\otimes\phi_B')\circ(G_{\star A}\otimes \mathrm{id}_B)(\rho_{AB}),
  \end{equation}
which means that we can commute a quantum channel from before $\mc{G}_\star$ to afterwards. Consider an iterative dilution protocol in Fig.~\ref{figuredistill}(b), where we can commute $\mc{G}_2\circ\mc{M}_1$ to $\mc{M}_1'\circ\mc{G}_2$. Then we combine $\mc{M}_1'$ with $\mc{M}_2$ and continue the operation. In the end, we result in a protocol with no intermediate channels between the resource channels. We can furthermore assume that these channels are used in parallel, since sequential concatenation is effectively useless. Thus we have reduced a iterative dilution protocol to a parallel protocol. In conclusion, we have $\mc{R}_{\text{dilute}}(\mc{N})= \tilde{\mc{R}}_{\text{dilute}}(\mc{N})$.

Now we prove the lower bound. Since the two types of dilution protocol are equivalent, we only consider the parallel protocol. Suppose there is a parallel dilution protocol such that for large $n$,
\begin{equation}
  \|\phi_2\circ\left(\mc{G}_\star^{\otimes nR}\otimes \mathrm{id}\right)\circ\phi_1-\mc{N}^{\otimes n}\|_\diamond\leq\varepsilon.
\end{equation}
Then 
\begin{equation}
\begin{aligned}
  \mc R_b(\N^{\otimes n}) &\le \mc R_b\left(\phi_2\circ\left(\mc{G}_\star^{\otimes nR}\otimes \mathrm{id}\right)\circ\phi_1\right)+O(\varepsilon n),\\
  &= \max_{\rho}\mu\left(\left(\phi_2\circ\left(\mc{G}_\star^{\otimes nR}\otimes \mathrm{id}\right)\circ\phi_1\right)\otimes\mathrm{id}(\rho)\right)-\mu(\rho)+O(\varepsilon n),\\
  &= \max_{\rho}\mu\left(\left(\phi_2\circ\left(\mc{G}_\star^{\otimes nR}\otimes \mathrm{id}\right)\right)(\rho')\right)-\mu(\rho)+O(\varepsilon n),\\
  &\le \max_{\rho}\mu\left(\left(\mc{G}_\star^{\otimes nR}\otimes \mathrm{id}\right)(\rho')\right)-\mu(\rho)+O(\varepsilon n),\\
  &\le \max_{\rho}nR+\mu(\rho')-\mu(\rho)+O(\varepsilon n),\\
  &\le \max_{\rho}nR+O(\varepsilon n).\\
\end{aligned}
\end{equation}
Here the first line is due to the continuity of $\mc R_b(\mc{N})$; in the third line, we denote $\rho'=\phi_1\otimes \mathrm{id}(\rho)$; the last three lines are due to the monotonicity and additivity of $\mu$.

Recall from Lemma~\ref{superadditive} that the additivity of $\mu$ implies the superadditivity of $\mc{R}_b(\mc{N})$, which gives us
\begin{equation}
  \mc{R}_b(\mc{N})\leq \frac{1}{n}\mc{R}_b\left(\mc{N}^{\otimes n}\right)\leq R+O(\varepsilon).
\end{equation}
Taking the asymptotic limit $\varepsilon\to 0^+$ and $n\to\infty$, we obtain the lower bound.

\end{proof}

\section{Applications in the Resource Theory of Coherence}
We present the applications of our results in the resource theory of coherence. We first study the properties of the measures of channel coherence (Lemma~\ref{propertycoherence}). In the operational resource theory of channel coherence, we show that the optimal resource channels can be relaxed to a larger set (Lemma~\ref{equivalent}). By applying our general results and using the resource structure of coherence, we show that the distillation rates are exactly equal to the resource generating powers (Result~\ref{coherencedistillation}, Corollary~1), and then conclude that the channel resource theory of coherence is asymptotically irreversible (Result~\ref{coherenceirreversible}, Corollary~2). Finally, we give an exact characterization of the dilution rate with the regularization of the max entropy of channel coherence (Result~\ref{coherencedilution}, Theorem~6).

\begin{lemma}\label{propertycoherence}
  $\mc{C}_{r,\mc I}(\mc{N})=\mc{C}_{r,g}(\mc{N})$ is an asymptotic resource measure satisfying (R1-5).
\end{lemma}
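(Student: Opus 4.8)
The plan is to first pin down the claimed equality together with a closed-form expression, and then read off (R1--R5) from that expression. For the equality $\mc{C}_{r,\mc{I}}(\mc{N})=\mc{C}_{r,g}(\mc{N})$, I would invoke Remark~\ref{Rglowerbound}: the resource theory of coherence possesses a resource destroying channel, the full dephasing map $\Delta$ in the incoherent basis, which satisfies $C_r(\rho)=S(\rho\|\Delta(\rho))=\min_{\delta\in\mc{I}}S(\rho\|\delta)$. Since the incoherent states admit a product structure (every diagonal $\delta_{AE}$ is a convex combination of products of diagonal states), the hypotheses of Remark~\ref{Rglowerbound} are met, so the general inequality $\mc{R}_g\le\mc{R}_\Omega$ of Result~\ref{Rgupperbound} is saturated, yielding $\mc{C}_{r,g}(\mc{N})=\mc{C}_{r,\mc{I}}(\mc{N})$.

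Next I would derive the simplified form $\mc{C}_{r,\mc{I}}(\mc{N})=\max_i C_r(\mc{N}(\ket{i}\bra{i}))$, which is the technical heart of the lemma. Writing an arbitrary incoherent input as $\delta_{AE}=\sum_{ij}q_{ij}\ket{i}\bra{i}_A\otimes\ket{j}\bra{j}_E$, its image is $\mc{N}\otimes\mathrm{id}(\delta_{AE})=\sum_{ij}q_{ij}\,\mc{N}(\ket{i}\bra{i})\otimes\ket{j}\bra{j}_E$. Applying convexity of $C_r$ together with the tensor additivity $C_r(\rho\otimes\sigma)=C_r(\rho)+C_r(\sigma)$ and $C_r(\ket{j}\bra{j})=0$ gives $C_r(\mc{N}\otimes\mathrm{id}(\delta_{AE}))\le\max_i C_r(\mc{N}(\ket{i}\bra{i}))$, while feeding a single basis state $\ket{i}\bra{i}$ with no ancilla attains the bound. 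This shows that neither mixing nor ancillae help, collapsing the optimization onto the $d$ basis states.

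With this expression the remaining properties are short. Properties (R1) and (R2) already follow from Result~\ref{monotone2}, since $\mc{C}_{r,\mc{I}}=\mc{R}_\Omega$; alternatively $C_r\ge 0$ gives nonnegativity, and because any $\mc{M}\in\mathrm{MIO}$ sends each $\ket{i}\bra{i}$ to an incoherent state, $\mc{C}_{r,\mc{I}}(\mc{M})=0$. For (R3), the channel $\mc{G}_\star$ outputs $\Psi_2$ on every basis input, so $\mc{C}_{r,\mc{I}}(\mc{G}_\star)=C_r(\Psi_2)=\log 2=1$. For (R4) I would use that the incoherent basis of $\mc{N}\otimes\mc{M}$ is the product basis $\{\ket{i}\otimes\ket{j}\}$, whence $\mc{C}_{r,\mc{I}}(\mc{N}\otimes\mc{M})=\max_{i,j}C_r(\mc{N}(\ket{i}\bra{i})\otimes\mc{M}(\ket{j}\bra{j}))$; additivity of $C_r$ then splits this as $\max_i C_r(\mc{N}(\ket{i}\bra{i}))+\max_j C_r(\mc{M}(\ket{j}\bra{j}))$, establishing full additivity rather than mere subadditivity.

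Finally, for (R5), if $\|\mc{N}-\mc{M}\|_{\diamond}\le\varepsilon$ then in particular $\|\mc{N}(\ket{i}\bra{i})-\mc{M}(\ket{i}\bra{i})\|_1\le\varepsilon$ for every $i$; invoking the asymptotic continuity of the relative entropy of coherence and the $1$-Lipschitz property of the maximum over the finitely many basis indices gives $|\mc{C}_{r,\mc{I}}(\mc{N})-\mc{C}_{r,\mc{I}}(\mc{M})|\le O(f(\varepsilon)\log d)$. I expect the main obstacle to be the reduction to basis states and the resulting full additivity (R4): everything downstream is routine once the optimization is known to collapse onto $\{\ket{i}\bra{i}\}$, but that collapse relies on the convexity and exact tensor additivity of $C_r$, which are special features of the coherence structure rather than of a generic resource theory, and the continuity step leans on the externally known asymptotic continuity of $C_r$.
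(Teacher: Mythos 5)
Your proposal is correct and follows essentially the same route as the paper: the equality $\mc{C}_{r,\mc I}=\mc{C}_{r,g}$ via Result~\ref{Rgupperbound} and Remark~\ref{Rglowerbound} (dephasing as the resource destroying channel), additivity from the product structure of the incoherent basis together with convexity and additivity of $C_r$, and continuity from the diamond-norm bound on basis inputs plus asymptotic continuity of $C_r$. The only organizational difference is that you establish the closed form $\max_i C_r(\mc{N}(\ket{i}\bra{i}))$ first and read (R3)--(R5) off it, whereas the paper proves (R4) directly by a two-sided argument over incoherent states and invokes the closed form only for (R5); the ingredients are identical.
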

\begin{proof}
  First of all, $\mc{C}_{r,\mc I}(\mc{N})=\mc{C}_{r,g}(\mc{N})$ follows from Result~\ref{Rgupperbound} and Remark~\ref{Rglowerbound}.

  Normalization (R3) follows from the normalization property of $C_r(\rho)$. For additivity (R4), We denote the overall space of incoherent states as $\mathcal{I}$, where the space of incoherent states for the input system of $\mathcal{N},\mathcal{M}$ is $\mathcal{I}_1,\mathcal{I}_2$, respectively. It follows from the definition that $\mathcal{I}_1\otimes \mathcal{I}_2\subseteq\mathcal{I}$. First we have
\begin{equation}
    \begin{aligned}
        \mc{C}_{r,\mc{I}}(\mathcal{N\otimes M})&=\max_{\delta\in\mathcal{I}}C_r(\mathcal{N}\otimes\mathcal{M}(\delta))\\
        &\geq \max_{\delta_1\otimes\delta_2\in\mathcal{I}_1\otimes\mathcal{I}_2}C_r(\mathcal{N}\otimes\mathcal{M}(\delta_1\otimes\delta_2))\\
        &= \max_{\delta_1\otimes\delta_2\in\mathcal{I}_1\otimes\mathcal{I}_2}C_r(\mathcal{N}(\delta_1)\otimes\mathcal{M}(\delta_2))\\
        &=\mc{C}_{r,\mc{I}}(\mathcal{N})+\mc{C}_{r,\mc{I}}(\mathcal{M}).
    \end{aligned}
\end{equation}
Here, the second line is from the fact that $\mathcal{I}_1\otimes \mathcal{I}_2\subseteq\mathcal{I}$, and the fourth line is from the additivity of $C_r(\rho)$.

Next, notice that any incoherent state $\delta\in\mathcal{I}$ can be expressed as $\delta=\sum_i p_i\delta_{1,i}\otimes\delta_{2,i}$, where $\delta_{1,i}\in\mathcal{I}_1$ and $\delta_{2,i}\in\mathcal{I}_2$. Then
\begin{equation}
    \begin{aligned}
        \mc{C}_{r,\mc{I}}(\mathcal{N\otimes M})&=\max_{\delta\in\mathcal{I}}C_r(\mathcal{N}\otimes\mathcal{M}(\delta))\\
        &=\max_{\delta=\sum_i p_i\delta_{1,i}\otimes\delta_{2,i}}C_r(\mathcal{N}\otimes\mathcal{M}(\delta))\\
        &=\max_{\delta=\sum_i p_i\delta_{1,i}\otimes\delta_{2,i}}C_r\left(\sum_ip_i \mathcal{N}(\delta_{1,i})\otimes\mathcal{M}(\delta_{2,i})\right)\\
        &\leq \max_{\delta=\sum_i p_i\delta_{1,i}\otimes\delta_{2,i}}\sum_i p_i C_r\left(\mathcal{N}(\delta_{1,i})\otimes\mathcal{M}(\delta_{2,i})\right)\\
        &\leq \max_{\delta_1\otimes\delta_2}C_r\left(\mathcal{N}(\delta_{1})\otimes\mathcal{M}(\delta_{2})\right)\\
        &=\mc{C}_{r,\mc{I}}(\mathcal{N})+\mc{C}_{r,\mc{I}}(\mathcal{M}).
    \end{aligned}
\end{equation}
The fourth line is from the convexity of $C_r(\rho)$, and the last line uses the additivity of $C_r(\rho)$. From the above two equations, we conclude that $\mc{C}_{r,\mc{I}}(\mathcal{N\otimes M})=\mc{C}_{r,\mc{I}}(\mathcal{N})+\mc{C}_{r,\mc{I}}(\mathcal{M})$.

For continuity (R5), since $\mc{C}_{r,\mc{I}}(\mathcal{N})=\max_{i}C_r(\mathcal{N}(\ket{i}\bra{i}))$, we can assume that $\mc{C}_{r,\mc{I}}(\mathcal{N})=C_r(\mathcal{N}(\ket{i}\bra{i}))$ and $\mc{C}_{r,\mc{I}}(\mathcal{M})=C_r(\mathcal{M}(\ket{j}\bra{j}))$. We further assume that $C_r(\mathcal{N}(\ket{i}\bra{i}))\geq C_r(\mathcal{M}(\ket{j}\bra{j}))$. Then
\begin{equation}
    \begin{aligned}
    |\mc{C}_{r,\mc{I}}(\mathcal{N})-\mc{C}_{r,\mc{I}}(\mathcal{M})|&=|C_r(\mathcal{N}(\ket{i}\bra{i}))-C_r(\mathcal{M}(\ket{j}\bra{j}))|\\
    &\leq |C_r(\mathcal{N}(\ket{i}\bra{i}))-C_r(\mathcal{M}(\ket{i}\bra{i}))|+|C_r(\mathcal{M}(\ket{j}\bra{j}))-C_r(\mathcal{M}(\ket{i}\bra{i}))|\\
    &\leq 2|C_r(\mathcal{N}(\ket{i}\bra{i}))-C_r(\mathcal{M}(\ket{i}\bra{i}))|.
    \end{aligned}
\end{equation}
By definition of the diamond norm, we know that $\|\mc{N}-\mc{M}\|_\diamond\leq\varepsilon$ implies $\|\mathcal{N}(\ket{i}\bra{i})-\mathcal{M}(\ket{i}\bra{i})\|_1\leq\varepsilon$. Using the continuity of the relative entropy of coherence, the result is obtained, which is $$|\mc{C}_{r,\mc{I}}(\mathcal{N})-\mc{C}_{r,\mc{I}}(\mathcal{M})|\leq O(\varepsilon\log d).$$
\end{proof}

For the channel resource theory of coherence, we define the optimal resource channels as those who can generate the maximal resource state from some free state, i.e., 
$$\mathfrak{G}=\left\{\mc{G}|\exists\delta\in\mc{I},\mc{G}(\delta)=\Psi_2\right\}.$$
The above definition is a generalization of the Assumption made in the previous Section. Below we show that they are actually equivalent in the resource theory of coherence, therefore our general results still hold under this new definition of the optimal resource channels.

\begin{lemma}\label{equivalent}
  The set of optimal unit resource channels $\mathfrak{G}=\left\{\mc{G}|\exists\delta\in\mc{I},\mc{G}(\delta)=\Psi_2\right\}$ is equivalent with the Assumption.
\end{lemma}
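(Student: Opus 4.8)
The plan is to show that the more general optimal resource set $\mathfrak{G}=\left\{\mc{G}\mid\exists\delta\in\mc{I},\ \mc{G}(\delta)=\Psi_2\right\}$ yields exactly the same asymptotic distillation and dilution rates as the singleton $\{\mc{G}_\star\}$ of the Assumption. The key is to establish that every $\mc{G}\in\mathfrak{G}$ and the constant channel $\mc{G}_\star$ are freely interconvertible at unit rate, i.e.\ one copy of either can be transformed into one copy of the other by a free super-operation $\mc{N}\mapsto\mc{M}_1\circ(\mc{N}\otimes\mathrm{id})\circ\mc{M}_2$ with $\mc{M}_1,\mc{M}_2\in\mathrm{MIO}$. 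Once this is in hand, any distillation (resp.\ dilution) protocol whose target (resp.\ source) lies in $\mathfrak{G}$ can be post-composed (resp.\ pre-composed) with the interconversion to produce a protocol of the same rate with target (resp.\ source) $\mc{G}_\star$, and vice versa, so the two definitions of $\mathfrak{G}$ induce identical rates and all of Results~\ref{standarddistillbound}--\ref{dilutionandlowerbound} carry over unchanged.

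First I would record the trivial inclusion $\mc{G}_\star\in\mathfrak{G}$, since $\mc{G}_\star(\delta)=\Psi_2$ for every incoherent $\delta$; this already gives ``$\ge$'' for distillation rates and ``$\le$'' for dilution rates when passing from the singleton to $\mathfrak{G}$. For the converse I would treat the two conversion directions separately. To convert an arbitrary $\mc{G}\in\mathfrak{G}$ into $\mc{G}_\star$, let $\delta\in\mc{I}$ be the incoherent state with $\mc{G}(\delta)=\Psi_2$ and take $\mc{M}_2$ to be the constant channel $\rho\mapsto\delta$; a constant channel with incoherent output maps every state to an incoherent state and hence lies in $\mathrm{MIO}$. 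Choosing $\mc{M}_1=\mathrm{id}$, the free super-operation sends $\mc{G}$ to $\mc{G}\circ\mc{M}_2$, which outputs $\mc{G}(\delta)=\Psi_2$ on every input and therefore equals $\mc{G}_\star$. This settles the distillation equivalence, since any protocol distilling $\mc{N}^{\otimes n}\to\mc{G}^{\otimes nR}$ may be followed by the $\mathrm{MIO}$ conversion to yield $\mc{N}^{\otimes n}\to\mc{G}_\star^{\otimes nR}$ at the same rate.

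For the reverse conversion $\mc{G}_\star\to\mc{G}$, needed for the dilution equivalence, I would invoke the one-shot channel simulation result of Ref.~\cite{Diaz2018usingreusing}: the constant channel $\mc{G}_\star$ supplying a fresh copy of the maximal coherent qubit $\Psi_2$ can, together with $\mathrm{MIO}$ pre- and post-processing, simulate any channel with qubit output. Since every $\mc{G}\in\mathfrak{G}$ has qubit output (its range must contain $\Psi_2$), there exist $\mc{M}_1',\mc{M}_2'\in\mathrm{MIO}$ with $\mc{M}_1'\circ(\mc{G}_\star\otimes\mathrm{id})\circ\mc{M}_2'=\mc{G}$, so any dilution protocol using $\mc{G}^{\otimes nR}$ can be fed by $\mc{G}_\star^{\otimes nR}$ at the same rate. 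Combining the two directions shows $\mc{G}$ and $\mc{G}_\star$ are freely interconvertible at unit rate, and tensorizing gives the corresponding conversion between $\mc{G}^{\otimes k}$ and $\mc{G}_\star^{\otimes k}$. I expect this simulation direction to be the main obstacle, as it rests on the nontrivial coherence-simulation construction of Ref.~\cite{Diaz2018usingreusing} rather than an elementary argument; the distillation direction, by contrast, reduces to merely prepending a constant incoherent state.
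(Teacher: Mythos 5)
Your proposal is correct and takes essentially the same route as the paper: both verify $\mc{G}_\star\in\mathfrak{G}$, convert any $\mc{G}\in\mathfrak{G}$ into $\mc{G}_\star$ by prepending the constant $\mathrm{MIO}$ preparation of the incoherent state $\delta$, and obtain the reverse conversion $\mc{G}_\star\to\mc{G}$ by invoking the $\mathrm{MIO}$ channel-simulation protocol of Ref.~\cite{Diaz2018usingreusing} with resource state $\Psi_2$. The only cosmetic difference is that you phrase the equivalence in terms of equality of distillation/dilution rates, while the paper phrases it as mutual interconvertibility of all channels in $\mathfrak{G}$ with $\mc{G}_\star$ under free super-operations, which is the same content.
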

\begin{proof}
To prove that they are equivalent, we only need to show the following Conditions:
\begin{enumerate}
  \item All channels in $\mathfrak{G}$ are equivalent: $\forall\mc{G}_1,\mc{G}_2\in\mathfrak{G}$, $\exists\phi_1,\phi_2\in\Phi$ such that $\mc{G}_2=\phi_1\circ\left(\mc{G}_1\otimes \mathrm{id}\right)\circ\phi_2$.
  \item The channel $\mc{G}_\star(\rho)=\rho_m$ that constantly outputs the maximal unit resource state $\rho_m$ belongs to $\mathfrak{G}$.
\end{enumerate}

By definition, Condition 2 is satisfied, i.e., the constant channel $\mc{G}_\star(\rho)=\Psi_2$ belongs to $\mathfrak{G}$. For Condition 1, we only need to show that every optimal unit resource channel is equivalent to the constant channel $\mc{G}_\star$, that is, $\forall\mc{G}\in\mathfrak{G}$, there exists $\mc{M}_1,\mc{M}_2,\mc{M}_3,\mc{M}_4\in\mathrm{MIO}$ such that
\begin{enumerate}
    \item (a) $\mc{M}_1\circ(\mc{G}\otimes\mathrm{id}_1)\circ\mc{M}_2=\mc{G}_\star$;
    \item (b) $\mc{M}_3\circ(\mc{G}_\star\otimes\mathrm{id}_2)\circ\mc{M}_4=\mc{G}$.
\end{enumerate}

For the first part (a), suppose $\mc{G}(\delta)=\Psi_2$, we set $\mathrm{id}_1$ to be one dimension, and set $\mc{M}_2=\mathrm{id}$ and $\mc{M}_1(\cdot)=\delta$, which simulates $\mc{G}_\star$. For the second part (b), let $\mc{M}_3(\rho)=\sigma\otimes\rho$ for an arbitrary incoherent state $\sigma$, and at the input of $\mc{M}_4$ we have the state $\Psi_2\otimes \rho$. For $\mc{M}_4$, we invoke the channel simulation protocol~\cite{Dana2018resource,Diaz2018usingreusing}, which states that any quantum channel can be implemented by a MIO using a resource state $\Psi_d$ where $d$ equals to the output dimension of the channel, and we let $\mc{M}_4$ to be such a MIO. The overall protocol simulates $\mc{G}$ using $\mc{G}_\star$. We also note that this protocol holds for all quantum channels, not limited to the optimal channels $\mc{G}\in\mathfrak{G}$.
\end{proof}

\begin{result}\label{coherencedistillation}
The parallel and iterative asymptotic distillation rates of channel coherence are $\mc{C}_{\mathrm{distill}}(\mc{N})=\mc{C}_{r,\mc I}(\mc{N})$ and $\tilde{\mc{C}}_{\mathrm{distill}}(\mc{N})=\mc{C}_{r,b}(\mc{N})$, respectively.
\end{result}
\begin{proof}
  These results are direct corollaries of Result~\ref{standarddistillbound} and Result~\ref{generaldistillbound}, as $\mc{C}_{r,\mc I}(\mc{N})$ is an asymptotic channel resource measure and the resource theory of coherence is reversible under MIO.
\end{proof}

\begin{result}\label{coherenceirreversible}
The channel resource theory of coherence is irreversible.
\end{result}
\begin{proof}
From Result~\ref{dilutionandlowerbound} we know that $\mc{C}_{\mathrm{dilute}}(\mc{N})\geq\mc{C}_{r,b}(\mc{N})$. Since $\mc{C}_{r,b}(\mc{N})$ is in general greater than $\mc{C}_{r,\mc I}(\mc{N})$ as verified by our numerical simulation, the conclusion is obtained. For example, considering the unitary channel $\mc{U}(\rho)=U\rho U^\dag$ with $U = e^{-i\sigma_y*\pi/10}$, we have $\mc{C}_{r,\mc I}(\mc{U})=0.4545$ and $\mc{C}_{r,b}(\mc{U})\ge0.5684$. Here
$\sigma_y$ is the Pauli-$Y$ matrix. We plot the calculation in Fig.~\ref{Fig:simulation}.
% \textcolor{red}{Needs numerical confirmation. Fix proof of lower bound in Result~\ref{dilutionandlowerbound}.}
\end{proof}

\begin{figure}[ht]\centering
  \includegraphics[width=0.5\linewidth]{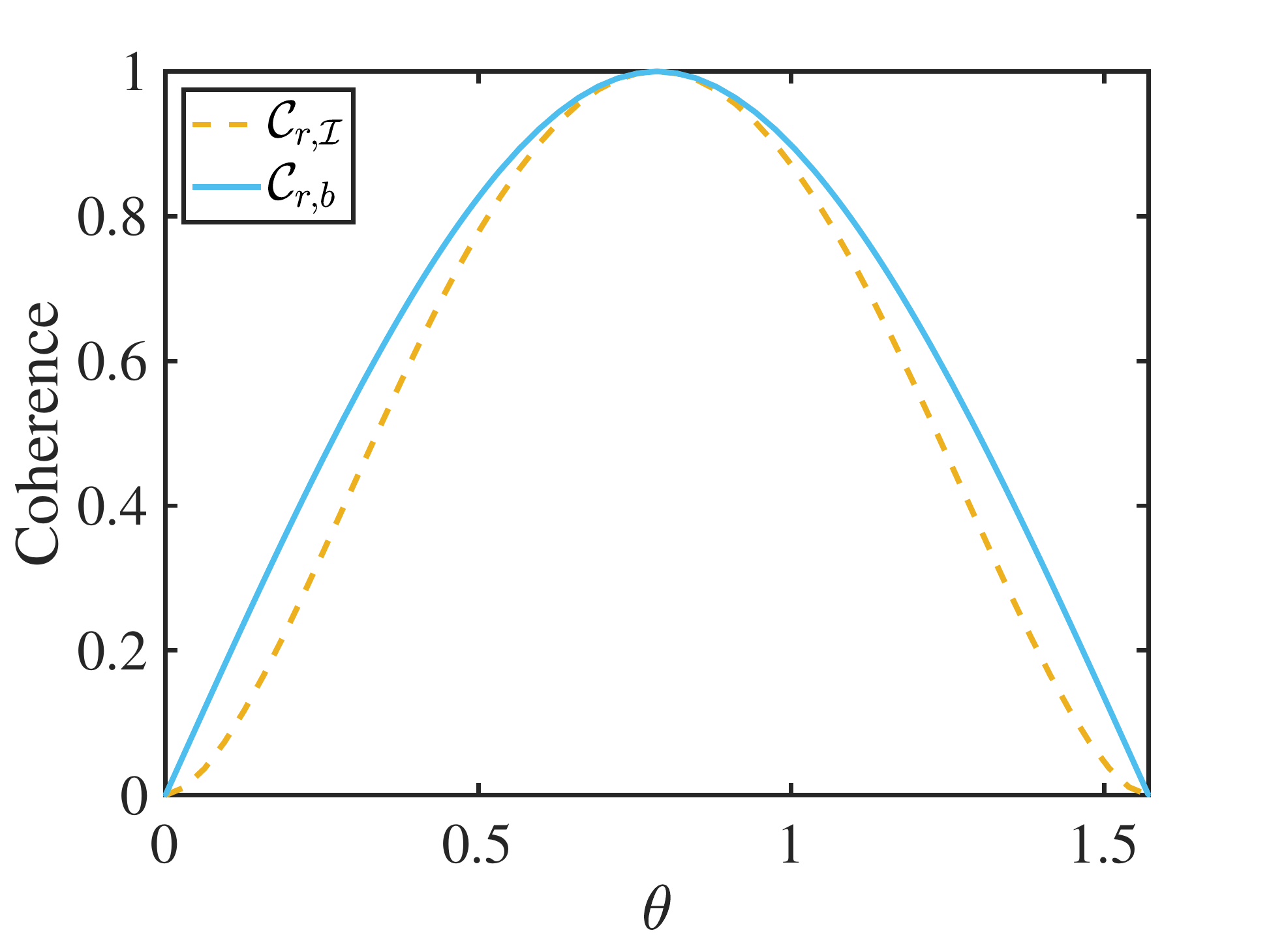}
  \caption{Numerical calculation for the channel coherence of $\mc{U}(\rho)=U\rho U^\dag$ with $U = e^{-i\theta\sigma_y}$. The result for $\mc{C}_{r,\mc I}(\mc{U})$ is accurate due to its simplified expression obtained in Lemma~\ref{propertycoherence}, while the result for $\mc{C}_{r,b}(\mc{U})$ is only a lower bound. }\label{Fig:simulation}
\end{figure}

\begin{result}\label{coherencedilution}
The asymptotic dilution rate of channel coherence equals to the regularized max entropy of channel coherence, $\mc{C}_{\mathrm{dilute}}(\mc{N})=\tilde{\mc{C}}_{\mathrm{dilute}}(\mc{N})=\mc{C}_{\mathrm{max}}^\infty(\mc{N})$, where
  \begin{equation}\label{channelmaxentropyregularized}
  \mc{C}_{\mathrm{max}}^\infty(\mc{N})=\lim_{\varepsilon\to 0^+}\lim_{n\to\infty}\frac{1}{n}\mc{C}_{\mathrm{max}}^\varepsilon(\mc{N}^{\otimes n}).
\end{equation}
\end{result}

\begin{proof}
  The proof follows the same spirit as Lemma~\ref{equivalent}. We show that a simulation protocol can be converted to a (parallel) dilution protocol, and vice versa. For any channel simulation protocol $$\mc{M}(\psi_k,\cdot)\approx\mc{N}(\cdot),$$
  there is a corresponding dilution protocol $$\mc{M}\circ(\mc{G}_k\otimes \mathrm{id}_1)\otimes \mathrm{id}_2\approx\mc{N},$$
  which puts input states to the input of $\mathrm{id}_1$. This shows that $\mc{C}_{\mathrm{sim}}(\mc{N})\geq \mc{C}_{\mathrm{dilute}}(\mc{N})$. Conversely, for any dilution protocol $$\mc{M}_2\circ(\mc{G}_k\otimes \mathrm{id})\circ\mc{M}_1\approx\mc{N},$$
  observe that for any input state $\rho$, the state before $\mc{M}_2$ is $\psi_k\otimes \rho'$ where $\rho'$ is a subsystem of $\mc{M}_1(\rho)$. Since the concatenation of $\mc{M}_1$, partial trace, and $\mc{M}_2$ is also a MIO, this protocol is also a simulation protocol, which means that $\mc{C}_{\mathrm{sim}}(\mc{N})\leq \mc{C}_{\mathrm{dilute}}(\mc{N})$. As the asymptotic channel simulation rate equals to the regularized max-entropy of channel coherence~\cite{Diaz2018usingreusing}, the result is obtained.
\end{proof}

\end{document}